\documentclass[12pt]{elsarticle}

\usepackage{geometry}
\geometry{letterpaper,margin=1.2in}

\usepackage{amsfonts,amssymb,amsmath,amsthm}
\usepackage[small]{complexity}
\usepackage{microtype}
\usepackage{enumitem}
\usepackage{hyperref} \urlstyle{same}
\usepackage{bbm}
\usepackage{url}

\usepackage{tikz}
\usetikzlibrary{arrows,shapes}

\setcitestyle{numbers,sort&compress}


\newtheorem{theorem}{Theorem}
\newtheorem{lemma}[theorem]{Lemma}
\newtheorem{proposition}[theorem]{Proposition}
\newtheorem{corollary}[theorem]{Corollary}

\theoremstyle{definition}
\newtheorem*{definition}{Definition}

\newtheorem*{claim}{Claim}
\newtheorem*{myquestion}{Direct sum question}

\theoremstyle{remark}
\newtheorem*{remark}{Remark}

\newtheorem{fact}{Fact}


\newcommand{\N}{\mathbb{N}} 
\newcommand{\F}{\mathbb{F}} 

\newcommand{\Ci}{\ensuremath{\mathcal{C}}} 
\newcommand{\Oh}{\ensuremath{\tilde{O}}}

\renewcommand{\A}{{\boldsymbol A}}
\newcommand{\B}{{\boldsymbol B}}
\renewcommand{\R}{{\boldsymbol R}}

\newcommand{\OR} {\texorpdfstring{\ensuremath{\lor}}{Or}}
\newcommand{\SUM}{\texorpdfstring{\ensuremath{+}}{Sum}}
\newcommand{\XOR}{\texorpdfstring{\ensuremath{\oplus}}{Xor}}

\newclass{\X}{X}
\newclass{\Y}{Y}

\newcommand{\Rewrite}{{\sffamily\upshape Rewrite}}
\newclass{\CNFSAT}{CNF\text{-}SAT}
\newclass{\UniqueSAT}{Unique\text{-}SAT}

\newcommand{\NumSAT}{{\raisebox{.05em}{\upshape\#}\CNFSAT}}
\newcommand{\ParitySAT}{{\raisebox{.057em}{\XOR}\CNFSAT}}

\newcommand{\xrewrite}{\texorpdfstring{{\upshape$(\OR,\XOR)$-\Rewrite}}{OR/XOR-Rewrite}}
\newcommand{\srewrite}{\texorpdfstring{{\upshape$(\OR,\SUM)$-\Rewrite}}{OR/SUM-Rewrite}}

\newcommand{\gap}{\textit{Gap}}

\newcommand{\card}[1]{\left\lvert {#1} \right\rvert}

\newcommand{\Prob}[1]{\Pr\left[ {#1} \right]}

\DeclareMathOperator{\GF}{GF}
\DeclareMathOperator{\rank}{rk}
\DeclareMathOperator{\supp}{supp}
\DeclareMathOperator{\cosupp}{co-supp}
\renewcommand{\poly}{\operatorname{poly}}


\hypersetup{
    colorlinks=true,
    pdfauthor={Magnus Find, Mika G\"o\"os, Matti J\"arvisalo, Petteri Kaski, Mikko Koivisto, Janne H. Korhonen},
    pdftitle={Separating OR, SUM, and XOR Circuits},
}

\makeatletter
\def\ps@pprintTitle{%
  \let\@oddhead\@empty
  \let\@evenhead\@empty
  \def\@oddfoot{\reset@font\hfil\thepage\hfil}
  \let\@evenfoot\@oddfoot
}
\makeatother

\begin{document}

\begin{frontmatter}

\title{{\bf Separating OR, SUM, and XOR Circuits}\tnoteref{conf}}
\tnotetext[conf]{This work is an extended version of two preliminary conference abstracts \cite{find13separating,jarvisalo12finding}.}

\author[usd]{Magnus Find}          
\author[uoft,hy]{Mika G\"o\"os}    
\author[hy]{Matti J\"arvisalo}     
\author[aalto]{Petteri Kaski}      
\author[hy]{\\Mikko Koivisto}        
\author[hy]{Janne H. Korhonen}     

\address[usd]{Department of Mathematics and Computer Science, University of Southern Denmark, Denmark}
\address[uoft]{Department of Computer Science, University of Toronto, Canada}
\address[hy]{HIIT \& Department of Computer Science, University of Helsinki, Finland}
\address[aalto]{HIIT \& Department of Information and Computer Science, Aalto University,  Finland}

\begin{abstract}
Given a boolean $n$ by $n$ matrix $A$ we consider arithmetic circuits for computing the transformation $x\mapsto Ax$ over different semirings. Namely, we study three circuit models: monotone OR-circuits, monotone SUM-circuits (addition of non-negative integers), and non-monotone XOR-circuits (addition modulo 2). Our focus is on \emph{separating} these models in terms of their circuit complexities. We give three results towards this goal:
\begin{enumerate}[label=(\arabic*),noitemsep]
\item We prove a direct sum type theorem on the monotone complexity of tensor product matrices. As a corollary, we obtain matrices that admit OR-circuits of size $O(n)$, but require SUM-circuits of size $\Omega(n^{3/2}/\log^2n)$.
\item We construct so-called \emph{$k$-uniform} matrices that admit XOR-circuits of size $O(n)$, but require OR-circuits of size $\Omega(n^2/\log^2n)$.
\item We consider the task of \emph{rewriting} a given OR-circuit as a XOR-circuit and prove that any subquadratic-time algorithm for this task violates the strong exponential time hypothesis.
\end{enumerate}
\end{abstract}

\begin{keyword}

arithmetic circuits \sep boolean arithmetic \sep idempotent arithmetic \sep monotone separations \sep rewriting
\end{keyword}

\end{frontmatter}

\newpage
\section{Introduction}

A basic question in arithmetic complexity is to determine the minimum size of an arithmetic circuit that evaluates a linear map $x\mapsto Ax$. In this work we approach this question from the perspective of relative complexity by varying the circuit model while keeping the matrix $A$ fixed, with the goal of separating different circuit models. That is, our goal is to show the existence of $A$ that admit small circuits in one model but have only large circuits in a different model. 

We will focus on boolean arithmetic and the following three circuit models. 
Our circuits consist of either
\begin{enumerate}[noitemsep]
\item only \OR-gates (i.e., boolean sums; rectifier circuits),
\item only \SUM-gates (i.e., integer addition; cancellation-free circuits), or
\item only \XOR-gates (i.e., integer addition mod 2).
\end{enumerate}
These three types of circuits have been studied extensively in their own right (see Section~\ref{sec:prior-work}), but fairly little is known about their relative powers.

Each model admits a natural description both from an algebraic and a combinatorial perspective. 

\paragraph{Algebraic perspective}
In the three models under consideration, each circuit with inputs $x_1,\ldots,x_n$ and outputs $y_1,\ldots,y_m$ computes a vector of \emph{linear forms}
\[
y_i=\sum_{j=1}^n a_{ij}x_j,\qquad i=1,\ldots,m.
\]
That is, $y=Ax$, where $A=(a_{ij})$ is an $m$ by $n$ boolean matrix with $a_{ij}\in\{0,1\}$ and the arithmetic is either 
\begin{enumerate}[noitemsep]
\item
in the boolean semiring $(\{0,1\},\lor,\land)$, 
\item
in the semiring of non-negative integers $(\N,+,\cdot)$, or 
\item
in $\GF(2)$.
\end{enumerate}
As an example, Fig.~\ref{fig:gap-example} displays two circuits for computing $y=Ax$ for the same $A$ using two different operators; the circuit on the right requires one more gate.
\begin{figure}[!t]
\begin{center}
\includegraphics[width=6.6cm]{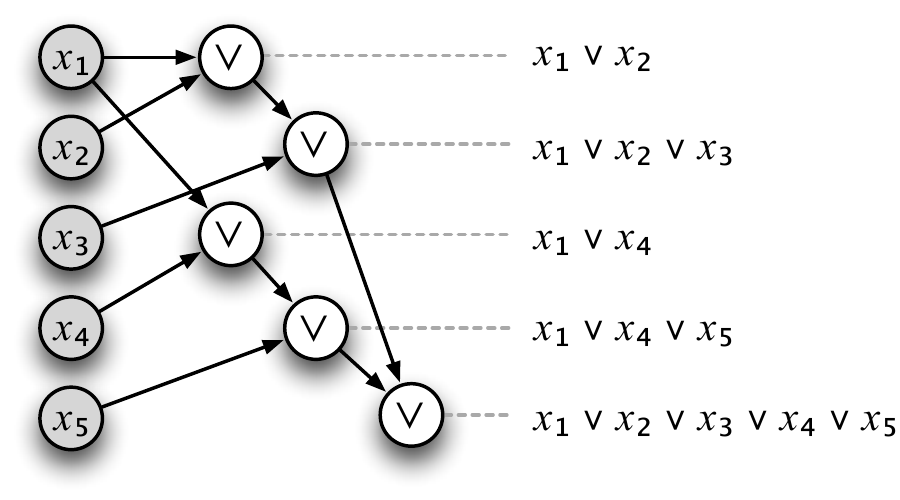}
\includegraphics[width=6.6cm]{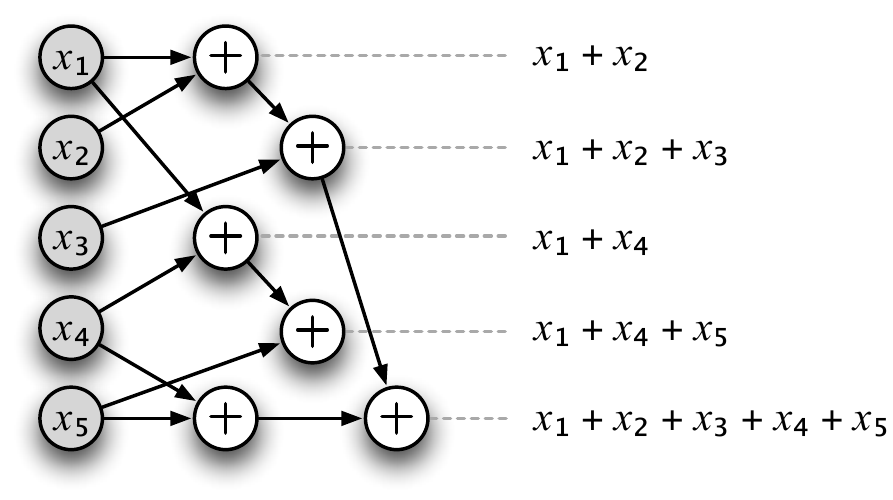}
\end{center}
\caption{An \OR-circuit (left) and a \SUM-circuit (right).}
\label{fig:gap-example}
\end{figure}

\paragraph{Combinatorial perspective}
A circuit computing $y=Ax$ for a boolean matrix $A$ can also be viewed combinatorially: every gate $g$ is associated with a subset of the formal variables $\{x_1,\ldots,x_n\}$; this set is called the \emph{support} of $g$ and it is denoted $\supp(g)$. The input gates correspond to the singletons $\{x_j\}$, $j=1,\ldots,n$, and every non-input gate computes either
\begin{enumerate}[noitemsep]
\item the set union (\OR),
\item the disjoint set union (\SUM), or
\item the symmetric difference (\XOR) of its children.
\end{enumerate}
This way an output gate $y_i$ will have $\supp(y_i)= \{x_j : a_{ij} = 1\}$.

Note the special structure of a \SUM-circuit: there is at most one directed path from any input $x_j$ to any output $y_i$. In fact, from this perspective, every \SUM-circuit for $A$ is easy to interpret both as an \OR-circuit for $A$, and as a \XOR-circuit for $A$ (equivalently, there are onto homomorphisms from $(\N,+,\cdot)$ to $(\{0,1\},\lor,\land)$ and $\GF(2)$). In this sense, both \OR- and \XOR-circuits are at least as efficient as \SUM-circuits.

\paragraph{Relative complexity}
More generally we fix a boolean matrix $A$ and ask how the circuit complexity of computing $y=Ax$ depends on the underlying arithmetic. 

To make this quantitative, denote by $C_\OR(A)$, $C_\SUM(A)$, and $C_\XOR(A)$ the minimum number of wires in an unbounded fan-in circuit for computing $y=Ax$ in the respective models. For simplicity, we restrict our attention to the case of square matrices so that $m=n$.

For $\X,\Y\in\{\OR,\SUM,\XOR\}$, we are interested in the complexity ratios
\[
\gap_{\X/\Y}(n)\ := \max_{A\in\{0,1\}^{n\times n}} C_\X(A)/C_\Y(A).
\]
For example, we have that $\gap_{\OR/\SUM}(n)=\gap_{\XOR/\SUM}(n)= 1$ and that $\gap_{\SUM/\XOR}(n)\geq \gap_{\OR/\XOR}(n)$ for all $n$, by the above fact that each \SUM-circuit can be interpreted as an \OR-circuit and as a \XOR-circuit.

We review the motivation for studying separation bounds in Section~\ref{sec:prior-work}. Next, we state our results, which are summarised in Figure~\ref{fig:bounds}.

\begin{figure}[t]
\begin{center}
\begin{tikzpicture}[auto,scale=1.2,thick]
\tikzstyle{n} = [fill=white,font=\large,draw=black!50,thick,
rectangle, rounded corners,
minimum height=1.9em,
minimum width=1.9em,
inner sep=0];

\small
\node[n] (sum) at (90:2) {{$+$}};
\node[n] (xor) at (-20:2) {{$\oplus$}};
\node[n] (or) at (200:2) {{$\lor$}};
\draw[-latex,bend left=10] (or) edge node [left] {
$\begin{array}{c}
\Omega(n^{1/2}/\log^2n), \\
{}[n^{1-o(1)}]
\end{array}$} (sum);
\draw[-latex,bend left=10] (sum) edge node [right] {1} (or);
\draw[-latex,bend right=10] (sum) edge node [left] {1} (xor);
\draw[-latex,bend left=10] (or) edge node [above] {$[n^{1-o(1)}]$} (xor);
\draw[-latex,bend left=10] (xor) edge node [below] {$\Omega(n/\log^2n)$} (or);
\draw[-latex,bend right=10] (xor) edge node [right] {$\ \Omega(n/\log^2n)$} (sum);

\end{tikzpicture}
\end{center}
\vspace*{-4mm}
\caption{Separation bounds. An arrow from $\Y$ to $\X$ is labelled with $\gap_{\X/\Y}(n)$; bounds for $(\X,\Y)$-\Rewrite{} are given inside square brackets.}
\label{fig:bounds}
\end{figure}
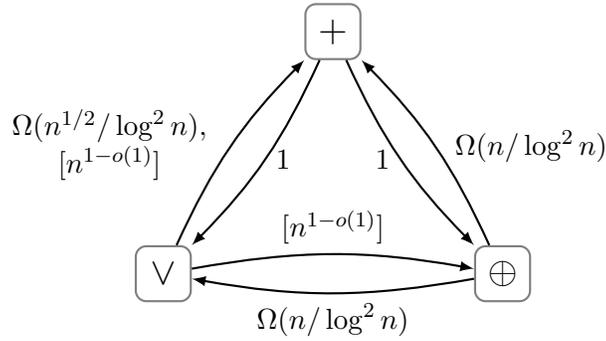

\subsection{Our results} \label{sec:results}

We begin by studying the monotone complexity of tensor product matrices of the form
\[
A=B_1\otimes B_2,
\]
where $\otimes$ denotes the usual Kronecker product of matrices. In Section~\ref{sec:tensor}, we prove a direct sum type theorem on their monotone complexity. As a corollary, we obtain matrices that are easy for \OR-circuits, $C_\OR(A)=O(n)$, but hard for \SUM-circuits, $C_\SUM(A)=\Omega(n^{3/2}/\log^2n)$. This implies our first separation:
\begin{theorem} \label{thm:tensor}
$\gap_{\SUM/\OR}(n) = \Omega(n^{1/2}/\log^2n)$.
\end{theorem}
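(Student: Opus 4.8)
The plan is to reduce the statement to the construction of a single matrix $A\in\{0,1\}^{n\times n}$ with $C_\OR(A)=O(n)$ and $C_\SUM(A)=\Omega(n^{3/2}/\log^2 n)$: since $\gap_{\SUM/\OR}(n)\ge C_\SUM(A)/C_\OR(A)$ by definition, such an $A$ immediately yields the theorem. I would obtain $A$ as a Kronecker product $A=B_1\otimes B_2$ of two $N\times N$ matrices with $N=\Theta(\sqrt n)$, chosen so that one factor keeps the OR-complexity low while the direct sum theorem of Section~\ref{sec:tensor} forces the SUM-complexity up.

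For the OR upper bound I would use the standard ``tensor of circuits'' construction: identifying the input $x\in\{0,1\}^{N^2}$ with an $N\times N$ matrix $X$ and using the mixed-product identity $(B_1\otimes B_2)\,\mathrm{vec}(X)=\mathrm{vec}(B_2 X B_1^\top)$, one first applies an OR-circuit for $B_2$ to each of the $N$ columns of $X$, and then an OR-circuit for $B_1^\top$ to each of the $N$ rows of the result. This costs $N\cdot C_\OR(B_2)+N\cdot C_\OR(B_1^\top)$ wires, which is $O(N^2)=O(n)$ provided both factors and their transposes admit linear-size OR-circuits.

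For the SUM lower bound I would invoke the direct sum type theorem from Section~\ref{sec:tensor}, which (roughly) lower-bounds $C_\SUM(B_1\otimes B_2)$ by $N\cdot C_\SUM(B_i)$ up to a polylogarithmic loss. Choosing a factor whose SUM-complexity is as large as possible, of order $N^2/\log N$, then gives $C_\SUM(A)=\Omega\!\big(N^3/\log^2 N\big)=\Omega\!\big(n^{3/2}/\log^2 n\big)$, and combining the two bounds proves the theorem.

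The hard part is the choice of the factors $B_1,B_2$: one needs a matrix that is genuinely incompressible for cancellation-free (SUM) circuits yet still cheap for rectifier (OR) circuits, and these two demands pull in opposite directions. A pure counting argument cannot deliver this, since there are only $2^{O(N\log N)}$ matrices of linear OR-complexity, so the SUM-complexity of the hardest such matrix is bounded below by counting by only $\Omega(N)$; the gap must instead come from the structural rigidity of SUM-circuits (the at-most-one-path property). I expect the right move is to take $B_1\neq B_2$, with $B_1$ supplying a ``rich but OR-cheap'' interconnection pattern and $B_2$ carrying the incompressibility, and then to verify that the direct sum theorem certifies that the tensor inherits the SUM-hardness of $B_2$ while the construction above certifies that it inherits the OR-cheapness of both factors. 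Checking that the chosen pair satisfies the hypotheses of the direct sum theorem --- in particular that it avoids degenerate cases such as an all-ones factor, for which the tensor collapses to linear size in both models --- is the crux of the argument.
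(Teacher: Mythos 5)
Your reduction to exhibiting a single matrix with $C_\OR=O(n)$ and $C_\SUM=\Omega(n^{3/2}/\log^2 n)$, and the decision to look for it as a Kronecker product of $\sqrt{n}\times\sqrt{n}$ factors, match the paper. But the way you propose to obtain the two bounds has a genuine gap and is in fact circular. Your OR upper bound applies a circuit for one factor to every column and a circuit for the other to every row, costing $N\cdot C_\OR(B_2)+N\cdot C_\OR(B_1^\top)$ with $N=\sqrt{n}$, so it needs \emph{both} factors to have linear-size OR-circuits; your SUM lower bound needs a factor with $C_\SUM(B_2)=\Omega(N^2/\log N)$. A matrix with $C_\OR(B_2)=O(N)$ and $C_\SUM(B_2)=\Omega(N^2/\log N)$ would by itself witness $\gap_{\SUM/\OR}(N)=\Omega(N/\log N)$, a far stronger separation than the one being proved, so the plan presupposes what it is supposed to deliver. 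Moreover, the direct-sum statement you invoke, $C_\SUM(B_1\otimes B_2)\gtrsim N\cdot C_\SUM(B_i)$, is not what the paper establishes: the inequality $C_\SUM(B\otimes A)\geq\rank_\SUM(B)\cdot C_\SUM(A)$ is explicitly posed as an open question, and Theorem~\ref{thm:st-free-lb} proves only $C_\SUM(B\otimes A)\geq\rank_\SUM(B)\cdot\card{A}/(st)$ for $(s,t)$-free $A$, i.e., a bound in terms of a cover number of one factor and the Mehlhorn--Pippenger quantity of the other, not in terms of the circuit complexity of either factor.

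The missing idea is that the asymmetry should come not from one factor being SUM-hard yet OR-easy, but from a gap between the two \emph{cover numbers} of a factor. The paper takes $B=\bar{I}$, the complement of the identity, for which $\rank_\OR(\bar{I})=\Theta(\log N)$ while $\rank_\SUM(\bar{I})=N$, and lets $A$ be uniformly random, hence dense and $O(\log N)$-free; neither factor alone exhibits a SUM/OR gap ($\bar{I}$ even has $O(N)$-size SUM-circuits via prefix and suffix sums, and a random $A$ is hard for both monotone models). The lower bound $C_\SUM(\bar{I}\otimes A)=\Omega(N^3/\log^2 N)=\Omega(n^{3/2}/\log^2 n)$ then follows directly from Theorem~\ref{thm:st-free-lb}. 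The OR upper bound is also different from your tensor-of-circuits construction, which for random $A$ would cost $N\cdot C_\OR(A)=\Theta(N^3/\log N)$: instead one writes $\bar{I}=PQ^\top$ with inner dimension $\rank_\OR(\bar{I})=O(\log N)$ and computes $X\mapsto (A(XQ))P^\top$, so that $A$ is applied to only $O(\log N)$ columns; with a Lupanov-style refinement this gives $C_\OR(\bar{I}\otimes A)=O(N^2)=O(n)$ even though $A$ itself is OR-hard. Fixing your argument essentially requires importing both of these ingredients.
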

We are not aware of any prior lower bound techniques that work against \SUM-circuits, but not against \OR-circuits. Hence, as far as we know, Theorem~\ref{thm:tensor} is a first step in this direction.

Next, we separate \OR- and \SUM-circuits from \XOR-circuits by considering matrices that look \emph{locally random} in the following sense:
\begin{definition}[$k$-uniformity] \label{def:uniformity}
A random matrix $\A$ is called \emph{$k$-uniform} if the entries in every $k\times k$ submatrix have a marginal distribution that is uniform on~$\{0,1\}^{k\times k}$.
\end{definition}
Equivalently, a matrix is $k$-uniform if each of its entries is $0$ or $1$ with equal probability and the entries in every $k\times k$ submatrix are mutually independent.

In Section~\ref{sec:or-xor-sep} we construct $n^{\Omega(1)}$-uniform matrices that are easy for $\XOR$-circuits:
\begin{theorem} \label{thm:k-uniform}
There are $n^{\Omega(1)}$-uniform matrices $\A$ having $C_\XOR(\A) = O(n)$.
\end{theorem}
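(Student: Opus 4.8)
The plan is to take $\A$ in the factored form $\A = M\,\R\,N$, where $N$ is a fixed $m\times n$ matrix, $M$ a fixed $n\times m$ matrix, and $\R$ a \emph{uniformly random} $m\times m$ matrix over $\F_2$, with the middle dimension $m=\lceil\sqrt n\,\rceil$; all of the randomness of $\A$ lives in the small core $\R$. The factorisation is chosen because circuit size is subadditive under composition: chaining a \XOR-circuit for $N$ into one for $\R$ into one for $M$ gives a \XOR-circuit for $\A$ whose wire count is the sum. So it is enough to pick $N$ and $M$ sparse, with only $O(n)$ nonzero entries each (a matrix with $t$ ones has a trivial $t$-wire circuit), and to note that $\R$, being $m\times m$ with $m^{2}=O(n)$, has $C_\XOR(\R)\le m^{2}=O(n)$ no matter what its entries are. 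Hence $C_\XOR(\A)=O(n)$ holds for every matrix in the support of $\A$.

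For the uniformity, fix $k$-element sets $R$ of rows and $C$ of columns, and let $M_R\in\F_2^{k\times m}$ be the corresponding rows of $M$ and $N^C\in\F_2^{m\times k}$ the corresponding columns of $N$, so that the relevant submatrix of $\A$ is $M_R\,\R\,N^C$. The elementary fact I would use is: if $U\in\F_2^{k\times m}$ and $V\in\F_2^{m\times k}$ both have rank $k$, then the $\F_2$-linear map $\R\mapsto U\R V$ is onto $\F_2^{k\times k}$; since a surjective linear map sends the uniform distribution to the uniform distribution (all fibres have the same size), $M_R\,\R\,N^C$ is then uniform on $\{0,1\}^{k\times k}$. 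The theorem thus reduces to constructing \emph{sparse} matrices $M,N$ such that every $k$ rows of $M$ are $\F_2$-linearly independent and every $k$ columns of $N$ are $\F_2$-linearly independent, for some $k=n^{\Omega(1)}$.

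This last step is the technical heart, and I would handle it via expander graphs. View $M$ — say with a small constant number $d$ of ones per row, so $dn=O(n)$ ones in all — as a $d$-left-regular bipartite graph whose left vertices are the rows of $M$ and right vertices the columns. A set $S$ of rows is linearly \emph{dependent} precisely when every right vertex has an even number of neighbours in $S$; but then every vertex of $N(S)$ has at least two neighbours in $S$, forcing $d\card S\ge 2\card{N(S)}$. So it suffices that $\card{N(S)}>\tfrac d2\card S$ for every $S$ with $1\le\card S\le k$ (the unique-neighbour expansion property up to size $k$), and a random $d$-regular bipartite graph enjoys this with probability $1-o(1)$ for $k=\Omega(m)=\Omega(\sqrt n)$, by a standard first-moment estimate — one can either condition $\A$ on $M,N$ being good, or invoke explicit constant-degree unique-neighbour expanders. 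Taking $N^{T}$ to be a matrix of the same type as $M$ handles the columns of $N$ symmetrically. Then $\rank(M_R)=\rank(N^C)=k$ for all $R,C$, so the previous paragraph shows $\A$ is $k$-uniform with $k=n^{\Omega(1)}$.

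I expect the construction of the sparse outer layers to be the only genuine obstacle; the complexity bound and the surjectivity fact are routine. Worth noting are the two competing constraints that pin down the shape of the argument: the middle dimension must obey $m\lesssim\sqrt n$ for the dense core to fit inside $O(n)$ wires, while $k$ cannot exceed roughly $m$ because $\A=M\R N$ has rank at most $m$; the balance $m\asymp\sqrt n$ is exactly what yields $k=\Theta(\sqrt n)$.
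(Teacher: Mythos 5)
Your construction has the same skeleton as the paper's proof: sandwich a uniformly random $m\times m$ core $\R$ (with $m\approx\sqrt n$, so $C_\XOR(\R)\le m^2=O(n)$) between two fixed outer matrices of linear \XOR-complexity, and derive $k$-uniformity from the fact that every $k$ rows (resp.\ columns) of the outer matrices are linearly independent, so that $\R\mapsto M_R\,\R\,N^C$ is a surjective linear map and hence pushes the uniform distribution forward to the uniform distribution on $\F^{k\times k}$. The paper does exactly this with $\A=P^\top\R P$, where $P$ is an $O(\sqrt n)\times n$ matrix obtained from Miltersen's perfect-hashing construction (built on Spielman's linear-time encodable codes), giving $C_\XOR(P)=O(n)$ and every $k=\sqrt n/\log n$ columns independent. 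Your genuinely different ingredient is the outer layer: sparse constant-row-weight matrices whose small-set independence is established by a unique-neighbour-expansion/first-moment argument. This is more elementary (no error-correcting codes, and no appeal to the transposition principle for $C_\XOR(P^\top)$, since a matrix with $O(n)$ ones has a trivial $O(n)$-wire circuit), and it does suffice for the theorem, which only asks for $k=n^{\Omega(1)}$.

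However, your quantitative claim $k=\Omega(m)=\Omega(\sqrt n)$ (and the closing ``$k=\Theta(\sqrt n)$'') is wrong, on two levels. First, the first-moment estimate for a $d$-left-regular random graph with $n$ left and $m=\sqrt n$ right vertices only controls sets up to size about $n^{(d-4)/(2(d-2))}$: the expected number of non-expanding sets of size $s$ is at most $\left[(en/s)\,(eds/(2m))^{d/2}\right]^s$, and for $s\approx m$ the bracketed quantity is of order $\sqrt n$, not $o(1)$; so constant degree buys you $k=n^{c}$ with $c<1/2$ bounded away from $1/2$ (e.g.\ $c=1/4$ for $d=6$), not $k=\Omega(\sqrt n)$. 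Second, no choice of outer matrices, sparse or not, can beat $k=O(\sqrt n/\log n)$ here: if every $k$ rows of the $n\times m$ matrix $M$ are independent, then $M^\top$ is a parity-check matrix of a code of length $n$, codimension at most $m$, and minimum distance greater than $k$, and the sphere-packing bound forces $m=\Omega(k\log(n/k))$; with $m=O(\sqrt n)$ this caps $k$ at $O(\sqrt n/\log n)$, which is essentially what the paper's code-based $P$ attains. So your argument proves the theorem once the exponent claim is corrected. A minor further slip: linear dependence of a row set $S$ means some \emph{nonempty subset} of $S$ sums to zero, not that all of $S$ does; this is harmless because you require expansion for all sets of size at most $k$, so the minimal dependent subset yields the contradiction.
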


These $k$-uniform matrices turn out to be difficult to compute using monotone circuits. Indeed, as a corollary, we will obtain our second separation:
\begin{corollary} \label{cor:or-xor-sep}
$\gap_{\OR/\XOR}(n),\gap_{\SUM/\XOR}(n) = \Omega(n/\log^2n)$.
\end{corollary}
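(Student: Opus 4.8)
The plan is to combine Theorem~\ref{thm:k-uniform} with a monotone lower bound showing that $k$-uniform matrices are hard for \OR-circuits, and then to transfer the bound to \SUM-circuits for free: since $\gap_{\OR/\SUM}(n)=1$ we have $C_\SUM(A)\ge C_\OR(A)$ for every $A$, so it suffices to exhibit a single matrix $A^{*}$ with $C_\XOR(A^{*})=O(n)$ and $C_\OR(A^{*})=\Omega(n^{2}/\log^{2}n)$; then $\gap_{\OR/\XOR}(n),\gap_{\SUM/\XOR}(n)\ge C_\OR(A^{*})/C_\XOR(A^{*})=\Omega(n/\log^{2}n)$.

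The core of the argument is the following purely combinatorial lower bound for \OR-circuits: \emph{if an $n\times n$ Boolean matrix $A$ has at least $n^{2}/4$ ones and contains no $s\times s$ all-ones submatrix, then $C_\OR(A)=\Omega(n^{2}/s^{2})$}. I would prove this via a ``frontier'' counting argument. In an \OR-circuit the support $\supp(g)$ of a gate is the union of the supports of its children, so it is monotone along input-to-output paths; call a gate \emph{narrow} if $|\supp(g)|<s$ and \emph{wide} otherwise, so the narrow gates (including all inputs) form a downward-closed set and the wide gates an upward-closed set. For any wide gate $g$, writing $T_g$ for the set of outputs reachable from $g$, the rectangle $T_g\times\supp(g)$ is all-ones in $A$ (every output reachable from $g$ depends on every input in $\supp(g)$); since $|\supp(g)|\ge s$ and $A$ is $s$-free, this forces $|T_g|<s$. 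Now call a wire $h\to g$ a \emph{frontier wire} if $h$ is narrow and $g$ is wide. For each output $y_{i}$ with $|\supp(y_{i})|\ge n/10$ (there are $\Omega(n)$ such rows because $A$ has $\Omega(n^{2})$ ones), every input in $\supp(y_{i})$ reaches $y_{i}$ along a path that must cross from the narrow region to the wide region, i.e.\ uses a frontier wire $h\to g$ with $g$ reaching $y_{i}$ and that input lying in $\supp(h)$; since each such $\supp(h)$ has size $<s$, at least $|\supp(y_{i})|/s=\Omega(n/s)$ distinct frontier wires feed $y_{i}$ in this sense. Summing over the $\Omega(n)$ heavy outputs gives $\Omega(n^{2}/s)$ such incidences, and a frontier wire $h\to g$ is counted for at most $|T_g|<s$ outputs, so there are $\Omega(n^{2}/s^{2})$ frontier wires, hence $C_\OR(A)=\Omega(n^{2}/s^{2})$.

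Next I would verify that the two hypotheses of this lemma hold with high probability for an $n^{\Omega(1)}$-uniform matrix $\A$, taking $s=\lceil 2\log_{2}n\rceil+1=O(\log n)$, which is at most $k$ for large $n$. For the number of ones: partition the matrix into disjoint $k\times k$ blocks; inside a block the entries are mutually independent and uniform, so a Chernoff bound gives $\ge k^{2}/3$ ones per block except with probability $e^{-\Omega(k^{2})}$, and a union bound over the $\le(n/k)^{2}$ blocks (negligible against $e^{-\Omega(k^{2})}=e^{-n^{\Omega(1)}}$) yields at least $n^{2}/4$ ones overall. For the absence of an $s\times s$ all-ones submatrix: any fixed $s\times s$ submatrix lies inside some $k\times k$ submatrix, so its $s^{2}$ entries are independent and uniform and it is all-ones with probability $2^{-s^{2}}$; since there are at most $n^{2s}$ such submatrices, the union bound gives failure probability at most $n^{2s}2^{-s^{2}}=2^{s(2\log_{2}n-s)}\le 2^{-s}=o(1)$.

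Putting the pieces together: with high probability the $n^{\Omega(1)}$-uniform matrix $\A$ supplied by Theorem~\ref{thm:k-uniform} satisfies both hypotheses, so $C_\OR(\A)=\Omega(n^{2}/\log^{2}n)$ with high probability, while $C_\XOR(\A)=O(n)$ is guaranteed by the construction; hence some matrix $A^{*}$ in the support has both properties, which as noted above yields the claimed bounds on $\gap_{\OR/\XOR}(n)$ and $\gap_{\SUM/\XOR}(n)$. The step I expect to need the most care is the combinatorial \OR-circuit lower bound: the naive observation that a circuit's gates induce an all-ones-rectangle cover of $A$ is far too weak (a matrix is always covered by its $n$ columns), so one really has to exploit the layered narrow-then-wide structure of the circuit and charge against the frontier wires; lining up the double counting with the two uses of $s$-freeness -- bounding $|\supp(h)|$ on the narrow side and $|T_g|$ on the wide side -- is the delicate point.
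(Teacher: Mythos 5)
Your proposal is correct and follows essentially the same route as the paper: combine Theorem~\ref{thm:k-uniform} with the fact that an $n^{\Omega(1)}$-uniform matrix is w.h.p.\ dense and $O(\log n)$-free, deduce the \OR-circuit lower bound from freeness, and transfer it to \SUM-circuits via $C_\SUM(A)\ge C_\OR(A)$ --- this is exactly the content of Lemma~\ref{lem:uniform-implies-free} combined with Theorem~\ref{thm:k-uniform}. The only differences are cosmetic: your frontier-wire argument is a self-contained re-derivation of the Mehlhorn--Pippenger bound (Theorem~\ref{thm:mehlhorn-pippenger}), which the paper simply invokes, and you establish density by a Chernoff bound on disjoint $k\times k$ blocks where the paper uses Chebyshev via pairwise independence.
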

Separations between \OR- and \XOR-circuits have also been considered by Sergeev et al.~\cite{gashkov11complexity,grinchuk11thin} who proved the slightly weaker bound $\gap_{\OR/\XOR}(n) = \Omega(n/(\log^6n\log\log n))$. Furthermore, Jukna~\cite{jukna13comment} has informed us that the bound in Corollary~\ref{cor:or-xor-sep} can actually be proved more directly using existing methods~\cite{jukna06disproving,pudlak04pseudorandom}. Nevertheless, we hope our alternative approach via $k$-uniform matrices might be of independent interest---for example, in closing the gap between the current lower bound $\gap_{\OR/\XOR}(n) = \Omega(n/\log^2 n)$ and the best known upper bound $\gap_{\OR/\XOR}(n) = O(n/\log n)$; see Section~\ref{sec:prior-work}.

As is true in the case of $\gap_{\OR/\XOR}$ we conjecture more generally that all the non-trivial complexity gaps between the three models are of order $n^{1-o(1)}$. While we are unable to enlarge the gap in Theorem \ref{thm:tensor}, or prove any super-constant lower bounds on $\rho_{\XOR/\OR}$, our final result provides some evidence towards these conjectures.

In Section~\ref{sec:rewriting}, we show that if certain \OR-circuits that are derived from CNF formulas could be efficiently \emph{rewritten} as equivalent \SUM- or \XOR-circuits, this would imply unexpected consequences for exponential-time algorithms. More precisely, we study the following problem.
\begin{description}
\item[The {\normalfont$(\X,\Y)$-\Rewrite} problem:] On input an \X-circuit $\Ci$, output a \Y-circuit that computes the same matrix as $\Ci$. 
\end{description}
Both \srewrite{} and \xrewrite{} admit simple algorithms that output a circuit of size $O(|\Ci|^2)$ in time $O(|\Ci|^2)$. However, we show that any significant improvement on these algorithms would give a non-trivial $2^{(1 - \epsilon)n}\poly(n,m)$ time algorithm for deciding whether an $n$-variable $m$-clause CNF formula is satisfiable---this violates \emph{the strong exponential time hypothesis}~\cite{Impagliazzo:ksat}:

\begin{theorem} \label{thm:rewrite}
Neither \srewrite{} nor \xrewrite{} can be solved in time $O(\card{\Ci}^{2-\epsilon})$ for any constant $\epsilon > 0$, unless the strong exponential time hypothesis fails.
\end{theorem}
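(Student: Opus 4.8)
The plan is to reduce \CNFSAT{} to the \Rewrite{} problem by encoding a CNF formula as an \OR-circuit whose output matrix records, for each assignment, which clauses it satisfies, and then arguing that a \SUM- or \XOR-circuit for the same matrix can be evaluated on a cleverly chosen input to detect satisfiability faster than brute force. Concretely, given an $n$-variable $m$-clause formula $\varphi$, split the variables into two halves $x',x''$ of size $n/2$ each. Build a matrix $A$ whose rows are indexed by partial assignments $\alpha$ to $x'$ (so $N=2^{n/2}$ rows) and whose columns are indexed by partial assignments $\beta$ to $x''$ together with the clauses of $\varphi$, with $A_{\alpha,\beta}=1$ iff the combined assignment $(\alpha,\beta)$ leaves \emph{some} clause unsatisfied (or, dually, falsifies a fixed clause). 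The point is that this $A$ has a small \emph{\OR-circuit}: it factors through the $m$ clauses, each clause contributes an easily described rank-one-like block, and one can write an \OR-circuit of size $\poly(N,m)=\poly(2^{n/2},m)$ computing $x\mapsto Ax$. Feeding this \OR-circuit to an $O(|\Ci|^{2-\epsilon})$ algorithm for $(\OR,\SUM)$-\Rewrite{} (resp.\ $(\OR,\XOR)$-\Rewrite{}) produces a \SUM- (resp.\ \XOR-) circuit $\Ci'$ of size $O(2^{(1-\epsilon/2)n}\poly(m))$ computing the same $A$.

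Next I would use $\Ci'$ algorithmically. Choose the input vector $x$ to be the indicator of all partial assignments $\beta$ to $x''$, and recall that in a \SUM-circuit each entry of $Ax$ is exactly $\sum_\beta A_{\alpha,\beta}=$ the number of $\beta$'s that, together with $\alpha$, falsify the chosen clause(s); in a \XOR-circuit it is this count mod $2$. Either way, one recovers enough information to decide whether there exists a pair $(\alpha,\beta)$ making all clauses true: $\varphi$ is satisfiable iff for some $\alpha$ the corresponding output entry is smaller than its "all falsified" maximum (in the \SUM{} case), and an analogous parity/inclusion–exclusion argument handles \XOR. Since a circuit of size $s$ can be evaluated in time $O(s)$, the whole procedure runs in time $O(2^{(1-\epsilon/2)n}\poly(n,m))$, contradicting \SETH. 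To make the \XOR{} (parity-counting) version work, one typically needs to iterate the construction over all single clauses, or use an inclusion–exclusion / isolation trick so that a parity count still certifies existence of a satisfying assignment; this is where the two cases diverge in detail.

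The main obstacle — and the step deserving the most care — is the design of the \OR-circuit so that \emph{simultaneously} (i) its size is $\poly(2^{n/2},m)$, so that the hypothetical subquadratic rewriting yields a $2^{(1-\Omega(\epsilon))n}$-size target circuit, and (ii) the target \SUM- or \XOR-circuit, evaluated on the chosen input, genuinely decides \CNFSAT{} rather than merely a counting surrogate that could be satisfiable-blind. In particular, for \xrewrite{} one must ensure that the parity information extracted is not systematically zero; the cleanest fix is to arrange the matrix so that each output coordinate isolates a single potential witness (e.g.\ via a fixed clause plus a hashing/isolation layer folded into the \OR-circuit), so that a nonzero \XOR-output is equivalent to satisfiability. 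Verifying that the \OR-circuit size stays $\poly(2^{n/2},m)$ after adding this isolation gadget, and that the arithmetic over $\N$ versus $\GF(2)$ behaves as claimed on the chosen input, is the crux; the rest is bookkeeping with the $\epsilon$'s to confirm the final running time beats $2^{(1-\epsilon')n}\poly(n,m)$ for some $\epsilon'>0$ depending on $\epsilon$.
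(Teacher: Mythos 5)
Your \srewrite{} branch is essentially the paper's own proof: split the variables into halves, let the matrix record which pairs of half-assignments falsify some clause, observe that this matrix has a depth-2 \OR-circuit factoring through the $m$ clauses, rewrite it, and evaluate the resulting \SUM-circuit on the all-ones vector to detect (indeed count) satisfying assignments. One imprecision to fix: the \OR-circuit must have size $O(2^{n/2}m)$, not merely ``$\poly(2^{n/2},m)$'' --- a bound like $2^{n}$ is also polynomial in the number of rows but would make the subquadratic rewriting useless --- and your exponent arithmetic $(2^{n/2}\poly(m))^{2-\epsilon}=2^{(1-\epsilon/2)n}\poly(m)$ silently relies on the sharper bound. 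Since the clause-indexed middle layer does give $O(2^{n/2}m)$ wires, this is sloppiness rather than an error.

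The genuine gap is the \xrewrite{} case, which you explicitly leave unresolved. ``Iterating over single clauses'' or ``inclusion--exclusion'' does not repair the fact that a parity of falsifying counts can vanish on satisfiable instances, and ``folding a hashing/isolation layer into the \OR-circuit'' is not mere bookkeeping: parity or hashing constraints couple the left and right half-assignments in a way that is not a union/covering condition, so they destroy exactly the structure (a path through a clause gate) that keeps the depth-2 \OR-circuit at $O(2^{n/2}m)$ wires, and encoding long XOR constraints in CNF either blows up clause width or adds variables and eats into the exponent. The paper does not isolate inside the circuit at all: the hypothetical \xrewrite{} algorithm is used only to compute \ParitySAT{}, i.e.\ the parity of the total number of satisfying assignments, obtained as $\mathbbm{1}^\top\Ci^\XOR(\mathbbm{1})$ over $\GF(2)$ together with $|A|=n^2-|\bar{A}|$; one then invokes the $k$-CNF isolation lemma of Calabro et al., as packaged by Cygan et al., which turns any $2^{(1-\epsilon)n}\poly(n,m)$ algorithm for \ParitySAT{} into a $2^{(1-\epsilon')n}\poly(n,m)$ Monte Carlo algorithm for \CNFSAT{}, contradicting the strong exponential time hypothesis. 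Without this step (or an equally concrete substitute) your \XOR{} branch does not yield the theorem.
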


Theorem \ref{thm:rewrite} provides evidence, e.g., for the conjecture $\rho_{\XOR/\OR}=n^{1-o(1)}$ in the following sense. If there is a family of matrices $A$ witnessing $C_\XOR(A)/C_\OR(A)=n^{1-o(1)}$, then clearly no $O(|\Ci|^{2-\epsilon})$-time algorithm exists for \xrewrite{}: if we are given a minimum-size \OR-circuit for $A$ as input, there is no time to write down a legal output.

Our proof of Theorem \ref{thm:rewrite} shows, in particular, that an $O(\card{\Ci}^{2-\epsilon})$-time algorithm for \srewrite{} would give an improved algorithm for counting the number of satisfying assignments to a given CNF formula (\NumSAT). Similarly, an $O(\card{\Ci}^{2-\epsilon})$-time algorithm for \xrewrite{} would give an improved algorithm for deciding whether the number of satisfying assignments is odd (\ParitySAT).

\subsection{Notation}\label{sec:notation}

A {\em circuit} $\Ci$ is a directed acyclic graph where the vertices of in-degree (or {\em fan-in}) zero are called {\em input gates} and all other vertices are called {\em arithmetic gates}. One or more arithmetic gates are designated as {\em output gates}. The size $|\Ci|$ of the circuit is the number of edges (or {\em wires}) in the circuit.

We abbreviate $[n]:=\{1,\ldots,n\}$; all our logarithms are to base $2$ by default; and we write random variables in boldface.

\section{Related work} \label{sec:prior-work}

\paragraph{Upper bounds}
The trivial depth-$1$ circuit for a boolean matrix $A$ uses $\card{A}$ wires, where we denote by $\card{A}$ the \emph{weight} of $A$, i.e., the number of 1-entries in $A$. Even though $\card{A}$ might be of order $\Theta(n^2)$, Lupanov (as presented by Jukna~\cite[Lemma 1.2]{jukna12boolean}) constructs depth-2 circuits (applicable in all the three models) of size $O(n^2/\log n)$ for any $A$. This implies the universal upper bound
\[
\gap_{\X/\Y}(n)=O(n/\log n). \tag{Lupanov}
\]

\paragraph{Lower bounds}
Standard counting arguments~\cite[\defaultS1.4]{jukna12boolean} show that most $n \times n$ matrices have wire complexity $\Omega(n^2/\log n)$ in each of the three models. Combining this with Lupanov's upper bound we conclude that a random matrix does little to separate our models:
\begin{fact}\label{fact:random-a}
For a uniformly random $\A$, the ratio $C_\X(\A)/C_\Y(\A)$ is a constant w.h.p.
\end{fact}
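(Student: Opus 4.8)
The plan is to combine the two bounds quoted in the surrounding text: the universal Lupanov upper bound $C_\X(A) = O(n^2/\log n)$, valid in every one of the three models and for every boolean matrix $A$, together with a matching counting lower bound $C_\Y(\A) = \Omega(n^2/\log n)$ that holds with high probability over a uniform $\A$. Once both are in hand, for any fixed pair $\X,\Y \in \{\OR,\SUM,\XOR\}$ we get $C_\X(\A)/C_\Y(\A) = O(n^2/\log n)/\Omega(n^2/\log n) = O(1)$ with high probability, which is exactly the claim.

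The first ingredient is cited directly: Lupanov's construction (as in \cite[Lemma 1.2]{jukna12boolean}) gives, for \emph{every} $n \times n$ boolean $A$, a depth-$2$ circuit of $O(n^2/\log n)$ wires that works simultaneously as an \OR-, \SUM-, and \XOR-circuit, so $C_\X(A) = O(n^2/\log n)$ deterministically. The second ingredient is a standard encoding/counting argument (\cite[\defaultS1.4]{jukna12boolean}): in any of the three models, a circuit with at most $w$ wires on $n$ inputs and $n$ outputs can be described by $O(w \log n)$ bits (list each wire by its two endpoints among $O(n + w)$ gates), hence there are at most $2^{O(w\log n)}$ distinct matrices computable with $w$ wires. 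Since there are $2^{n^2}$ boolean $n \times n$ matrices, if $w = o(n^2/\log n)$ then $2^{O(w \log n)} = 2^{o(n^2)} = o(2^{n^2})$, so all but a $o(1)$ fraction of matrices require $\Omega(n^2/\log n)$ wires. Thus a uniformly random $\A$ has $C_\Y(\A) = \Omega(n^2/\log n)$ with probability $1 - o(1)$.

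Finally I would take a union bound over the (constantly many) pairs $(\X,\Y)$ so that, with high probability, $\A$ simultaneously satisfies $C_\X(\A) = O(n^2/\log n)$ (this part is deterministic) and $C_\Y(\A) = \Omega(n^2/\log n)$ for all six ordered pairs; on that event every ratio $C_\X(\A)/C_\Y(\A)$ is bounded by a universal constant. There is no real obstacle here --- the statement is essentially a bookkeeping consequence of two results already stated --- but the one point that needs a little care is making the counting bound robust to the exact wire/gate model: one should confirm that ``number of wires $\le w$'' already caps the number of gates (an arithmetic gate with no incoming wire is useless, so the relevant gates number $O(n + w)$), which keeps the description length at $O(w \log n)$ and makes the threshold land at $\Theta(n^2/\log n)$ rather than something weaker.
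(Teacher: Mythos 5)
Your proposal is correct and matches the paper's own (very brief) argument: the paper likewise derives the fact by combining Lupanov's universal upper bound $C_\X(A)=O(n^2/\log n)$ with the standard counting lower bound $\Omega(n^2/\log n)$ that holds for most matrices in each of the three models, so the ratio is $O(1)$ w.h.p. Your extra care about bounding the number of gates by the number of wires in the encoding argument is a reasonable detail, but it is the same route the paper takes.
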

Unsurprisingly, it can also be shown that finding a minimum-size circuit for a given matrix is NP-hard in all the models. For \OR- and \SUM-circuits this follows from the NP-completeness of the {\sf Ensemble Computation} problem as defined by Garey and Johnson~\cite[PO9]{garey-johnson}. For \XOR-circuits this was proved by Boyar et al.~\cite{boyar13logic}.

\paragraph{\OR-circuits}
The study of \OR-circuits (sometimes called \emph{rectifier circuits}) has been centered around finding \emph{explicit} matrices that are hard for \OR-circuits. Here, dense \emph{rectangle-free} matrices and their generalisations, \emph{$(s,t)$-free} matrices, are a major source of lower bounds.
\begin{definition}
A matrix $A$ is called \emph{$(s,t)$-free} if it does not contain an $(s+1)\times(t+1)$ all-1 submatrix. Moreover, $A$ is simply called \emph{$k$-free} if it is $(k,k)$-free.
\end{definition}
Nechiporuk~\cite{nechiporuk71boolean} and independently Lamagna and Savage~\cite{lamagna74computational} constructed the first examples of dense $1$-free matrices $A$ achieving $C_\OR(A)=\Omega(n^{3/2})$. Subsequently, Mehlhorn~\cite{mehlhorn79some} and Pippenger~\cite{pippenger80another} established the following theorem that gives a general template for this type of lower bound; we use it extensively later.
\begin{theorem}[Mehlhorn--Pippenger] \label{thm:mehlhorn-pippenger}
If $A$ is $(s,t)$-free, then $C_\OR(A)\geq |A|/(st)$.
\end{theorem}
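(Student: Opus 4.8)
The plan is to argue entirely within the combinatorial picture of \OR-circuits set up above, and to distribute the $1$-entries of $A$ over the wires of any given circuit so that no wire is blamed for more than $st$ of them; since $C_\OR(A)$ counts wires, this yields $|A|\le st\cdot C_\OR(A)$, which is the theorem. So fix an \OR-circuit $\Ci$ computing $A$. Two monotonicity facts come for free: along any wire $e=(u,v)$ we have $\supp(u)\subseteq\supp(v)$, so support sizes only grow along directed paths, while the set of output gates reachable from a given gate only shrinks along directed paths. Call a wire $e=(u,v)$ \emph{heavy} if $|\supp(v)|>t$. The entire role of the $(s,t)$-freeness hypothesis will be to supply the following claim: if $e=(u,v)$ is heavy, then at most $s$ output gates are reachable from $v$. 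To see this, fix $t+1$ columns inside $\supp(v)$; each output $y_i$ reachable from $v$ has $\supp(v)\subseteq\supp(y_i)$, hence row $i$ is all-$1$ on those $t+1$ columns, so $s+1$ such outputs would exhibit an $(s+1)\times(t+1)$ all-$1$ submatrix, which is forbidden.

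Given this claim, here is the charging scheme. Take a $1$-entry $(i,j)$, i.e.\ $x_j\in\supp(y_i)$, and fix once and for all a directed path $x_j=g_0\to g_1\to\cdots\to g_\ell=y_i$ (such a path exists precisely because $x_j\in\supp(y_i)$). If this path uses a heavy wire, let $p$ be least with $|\supp(g_p)|>t$ and charge $(i,j)$ to $(g_{p-1},g_p)$; by minimality $|\supp(g_{p-1})|\le t$, and of course $x_j\in\supp(g_{p-1})$ while $y_i$ is reachable from $g_p$. Otherwise $|\supp(y_i)|\le t$, and I charge $(i,j)$ to the final wire $(g_{\ell-1},y_i)$. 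Now fix a wire $w=(a,b)$ and count its charges. The two rules never hit the same $w$: the first charges only wires whose head has support exceeding $t$, the second only wires whose head has support at most $t$. If $w$ is charged by the first rule it is heavy, so by the claim at most $s$ outputs are reachable from $b$; any $(i,j)$ charging $w$ then has $y_i$ among these $\le s$ outputs and $x_j\in\supp(a)$ with $|\supp(a)|\le t$, giving at most $st$ charges. If $w$ is charged by the second rule, then $b$ is the output gate $y_i$ (so $i$ is determined) and every such pair has $x_j\in\supp(b)$ with $|\supp(b)|\le t$, at most $t\le st$ charges. Hence every wire carries at most $st$ charges and every $1$-entry is charged, so $|A|\le st\cdot|\Ci|$.

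The one step that is not pure bookkeeping is the heavy-wire claim, and I expect it to be the crux: it is the only place the combinatorial shape of $A$ enters, converting ``a wire that propagates more than $t$ distinct inputs forward'' into ``a wire that can influence at most $s$ outputs''. A secondary point to dispatch is the degenerate case in which an output gate is itself an input gate, so that the chosen path has length $0$ and there is no wire to charge; this concerns at most $n$ of the $1$-entries, so it is negligible for the dense matrices the theorem is later applied to, and it disappears entirely under the harmless normalisation that output gates are distinct, proper (non-input) gates.
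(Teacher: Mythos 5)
Your proof is correct, and it is essentially the classical argument of Mehlhorn and Pippenger: the paper itself gives no proof of this statement (it is quoted from the literature and reproved in generalised form only for \SUM-circuits in Theorem~6), so the right comparison is with the standard charging proof, which yours reproduces faithfully --- pick a path per $1$-entry, locate the first wire whose head has support larger than $t$, and use $(s,t)$-freeness to argue that such a ``heavy'' head can feed at most $s$ output rows, giving at most $st$ charges per wire. Two small points of hygiene. First, your degenerate case (an output that is an input gate) cannot arise under the paper's own definition, since there output gates are by definition arithmetic gates, i.e.\ of in-degree at least one; so no extra normalisation or ``negligible for dense matrices'' caveat is needed. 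Second, your heavy-wire claim and the second charging rule should be phrased in terms of row indices $i$ with $y_i$ reachable (respectively, with $y_i$ equal to the head), rather than output \emph{gates}: the freeness hypothesis bounds the number of row indices, and the tacit convention that the $n$ outputs are $n$ distinct designated gates is genuinely needed for the theorem (with a shared output gate for identical rows the bound would fail), so it is worth stating once rather than leaving implicit.
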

Currently, the best lower bound for an explicit $A$ is obtained by applying Theorem \ref{thm:mehlhorn-pippenger} to a matrix construction of Koll\'{a}r et al.~\cite{kollar96norm}; the lower bound is $C_\OR(A)\geq n^{2-o(1)}$ (see also Gashkov and Sergeev~\cite[\defaultS3.2]{gashkov11complexity}).

\paragraph{\XOR-circuits}
It is a long-standing open problem to exhibit explicit matrices requiring super-linear size \XOR-circuits. No such lower bounds are known even for log-depth circuits, and the only successes are in the case of bounded depth \cite{alon90linear,gal12tight}, \cite[\defaultS13.5]{jukna12boolean}. This, together with Fact~\ref{fact:random-a}, makes it particularly difficult to prove lower bounds on $\gap_{\XOR/\OR}$.



\paragraph{\SUM-circuits}
Additive circuits have been studied extensively in the context of the \emph{addition chain} problem (see Knuth~\cite[\defaultS4.6.3]{knuth98art} for a survey) and its generalisations~\cite{pippenger80evaluation}.

In cryptography, as observed by Boyar et al.~\cite{boyar13logic}, many heuristics that have been proposed for finding small \XOR-circuits produce, in fact, \SUM-circuits that do not exploit the cancellation of variables that is available in $\GF(2)$. Thus, the measure $\gap_{\SUM/\XOR}$ gives a lower bound on the approximation ratio achieved by any such minimisation heuristic.

\paragraph{Algebraic complexity}
A particular motivation for studying the separation between \OR- and \SUM-circuits is to understand the complexity of zeta transforms on partial orders~\cite{Bjorklund:zeta}. Indeed, the characteristic matrix of every partial order $\leq$ has an \OR-circuit proportional to the number of covering pairs in $\leq$, but the existence of small \SUM-circuits (and hence fast zeta transforms) is not currently understood satisfactorily.

\paragraph{Strong exponential time hypothesis}
Theorem \ref{thm:rewrite} is similar to other recent lower bound results for polynomial-time solvable problems based on the strong exponential time hypothesis \cite{DBLP:conf/soda/PatrascuW10}. See also \cite{cygan2012}.

\section{\SUM/\OR-Separation} \label{sec:tensor}

In this section we give a direct sum type theorem for the monotone complexity of tensor product matrices. Using this, we obtain a separation of the form
\begin{equation} \label{eq:tensor-sep}
\begin{aligned}
C_\OR(B\otimes A)  &= O(N), \\
C_\SUM(B\otimes A) &= \Omega(N^{3/2}/\log^2 N),
\end{aligned}
\end{equation}
where $\otimes$ denotes the usual Kronecker product of matrices and $N=n^2$ denotes the number of input and output variables. This will prove Theorem~\ref{thm:tensor}.

\subsection{Tensor products}
As a first example, let $A$ be a fixed boolean $n\times n$ matrix and consider the matrix product
\begin{equation} \label{eq:mult}
X\mapsto AX\,,
\end{equation}
where we think of $X$ as a matrix of $N = n\times n$ input variables. If we arrange these variables into a column vector $x$ by stacking the columns of $X$ on top of one another, then (\ref{eq:mult}) becomes
\begin{equation} \label{eq:tensormult}
x \mapsto (I\otimes A)x,
\end{equation}
where $I$ is the $n\times n$ identity matrix. That is, $I\otimes A$ is the block matrix having $n$ copies of $A$ on the diagonal.

The transformation (\ref{eq:tensormult}) famously admits non-trivial \XOR-circuits due to the fact that fast matrix multiplication algorithms can be expressed as small bilinear circuits over $\GF(2)$. However, it is easy to see that in the case of our monotone models, no non-trivial speed-up is possible: any \OR-circuit for (\ref{eq:tensormult}) must compute $A$ independently $n$ times:
\begin{equation} \label{eq:mult-lb}
C_\OR(I\otimes A) = n\cdot C_\OR(A).
\end{equation}
This follows from the observation that two subcircuits corresponding to two different columns of $X$ cannot share gates due to monotonicity.

\paragraph{Our approach}
We will generalise the above setting slightly and use tensor products of the form $B\otimes A$ to separate \OR- and \SUM-circuits. Analogously to (\ref{eq:mult}), one can check that the matrix $B\otimes A$ corresponds to computing the mapping
\begin{equation}\label{eq:axb}
X\mapsto AXB^\top.
\end{equation}
We aim to show that for suitable choices of $A$ and $B$ computing $B\otimes A$ is easy for \OR-circuits but hard for \SUM-circuits.  We will choose $A$ to have large complexity (e.g., choose $A$ at random), and think of $B$ as dictating how many independent copies of $A$ a circuit must compute.

More precisely, define $\rank_\OR(B)$ and $\rank_\SUM(B)$ as the minimum $r$ such that $B$ can be written as $B=PQ^\top$ over the boolean semiring or over the semiring of non-negative integers, respectively, where $P$ and $Q$ are $n\times r$ matrices. Equivalently, $\rank_\OR(B)$ (resp., $\rank_\SUM(B)$) is the minimum number of rectangles (resp., non-overlapping rectangles) that are required to cover all $1$-entries of $B$.

These cover numbers appear often in the study of communication complexity~\cite{kushilevitz97communication}. In this context, the matrix $B=\bar{I}$---the boolean complement of the identity $I$---is the usual example demonstrating a large gap between the two concepts~\cite[Example 2.5]{kushilevitz97communication}:
\begin{align*}
\rank_\OR(\bar{I}) &= \Theta(\log n),\\
\rank_\SUM(\bar{I}) &= n.
\end{align*}
We will use this gap to show that, up to polylogarithmic factors,
\begin{align*}
C_\OR(\bar{I}\otimes A)\enspace &\approx\enspace \rank_\OR(\bar{I})\cdot n^2,\\
C_\SUM(\bar{I}\otimes A)\enspace &\approx\enspace \rank_\SUM(\bar{I})\cdot n^2.
\end{align*}
In terms of the number of input variables $N=n^2$, we will obtain~(\ref{eq:tensor-sep}).
\subsection{Upper bound for \OR-circuits}

Suppose $B=PQ^\top$ where $P$ and $Q$ are $n\times \rank_\OR(B)$ matrices. We can compute (\ref{eq:axb}) as
\[
(A(XQ))P^\top,
\]
which requires 3 matrix multiplications, each involving $\rank_\OR(B)$ as one of the dimensions (the other dimensions being at most $n$).

If these 3 multiplications are naively implemented with an \OR-circuit of depth 3, each layer will contain at most $\rank_\OR(B)n^2$ wires so that $C_\OR(B\otimes A) \leq 3\rank_\OR(B)n^2$. However, one can still use Lupanov's techniques to save an additional logarithmic factor: if $\rank_\OR(B)=O(\log n)$, Corollary~1.35 in Jukna \cite{jukna12boolean} can be applied to show that each of the three multiplications above can be computed using $O(n^2)$ wires. Thus, for $B=\bar{I}$ we get
\begin{lemma} \label{lemma:upper-bound}
$C_\OR(\bar{I}\otimes A) = O(n^2)$ for all $A$. \qed
\end{lemma}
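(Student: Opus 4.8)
The goal is to show $C_\OR(\bar I \otimes A) = O(n^2)$ for every boolean $n\times n$ matrix $A$. Recall from the discussion preceding the lemma that $\bar I \otimes A$ corresponds to the map $X \mapsto AX\bar I^\top$, and that $\bar I = PQ^\top$ over the boolean semiring with $P,Q$ of inner dimension $r := \rank_\OR(\bar I) = \Theta(\log n)$. The plan is to factor the computation as $X \mapsto (A(XQ))P^\top$, i.e.\ as three consecutive matrix multiplications, each of which has $r = O(\log n)$ as one of its three dimensions while the remaining two dimensions are at most $n$. If we implemented each multiplication by the naive depth-$1$ \OR-circuit we would pay $r n^2 = \Theta(n^2 \log n)$ wires per layer, which is a $\log n$ factor too large; the point of the lemma is that this logarithmic factor can be shaved using Lupanov-style techniques.

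First I would make the factorization $\bar I = PQ^\top$ fully explicit, so that the inner dimension is genuinely $O(\log n)$ and not merely $O(\log n)$ up to constants that matter: this is the standard covering of the off-diagonal $1$-entries of $\bar I$ by $O(\log n)$ combinatorial rectangles (for each bit position $b$, take the rectangle of rows whose index has a $1$ in position $b$ crossed with the columns whose index has a $0$ in position $b$; this covers exactly the pairs $(i,j)$ with $i \neq j$). This fixes $P$ and $Q$ concretely as $n \times r$ boolean matrices with $r = \lceil \log n \rceil$ (or twice that, depending on orientation), and the three maps to be computed in sequence are $X \mapsto XQ$, then $Z \mapsto AZ$, then $W \mapsto WP^\top$.

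Next, I would bound the \OR-circuit cost of each of the three stages. Stages one and three are matrix multiplications by a fixed $n \times r$ (resp.\ $r \times n$) boolean matrix with $r = O(\log n)$: these are exactly maps of the shape "apply a fixed boolean matrix with one short dimension", and I would invoke Jukna's Corollary~1.35 (the Lupanov-type bound for rectifier circuits with a logarithmically small dimension) to conclude each can be realized with $O(n^2)$ wires. For the middle stage $Z \mapsto AZ$ with $Z$ an $n \times r$ matrix and $A$ arbitrary: this is $r$ independent copies of the linear map $x \mapsto Ax$, i.e.\ it is the map $I_r \otimes A$ up to relabeling; here I would again appeal to the Lupanov-type construction --- computing $r = O(\log n)$ simultaneous copies of an $n \times n$ boolean transformation can be done with $O(n^2)$ wires rather than the trivial $O(n^2 \log n)$, precisely because the number of distinct output patterns across the $r$ copies is still only $\poly(n)$, which is what the Lupanov counting argument exploits. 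Stacking the three stages, the total wire count is $O(n^2) + O(n^2) + O(n^2) = O(n^2)$, which proves the lemma.

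The main obstacle is the middle stage: unlike the two outer stages, where the short dimension $r$ sits on the side of the \emph{fixed} matrix (so one is literally in the setting of "multiply by a fixed $n\times O(\log n)$ matrix" covered verbatim by Jukna's corollary), here $A$ is arbitrary and full-sized, and the $O(\log n)$ only bounds the number of independent instances. I would need to check carefully that the Lupanov/Jukna bound applies in the form "compute $t$ copies of an arbitrary boolean $n\times n$ map in $O(n^2 + n^2 \cdot t/\log n)$ wires", so that $t = O(\log n)$ yields $O(n^2)$; this is essentially Lupanov's construction applied to the $n \times (tn)$-to-$n \times (tn)$ situation where the output columns come from a dictionary of size $2^n$ but one re-amortizes in blocks of width $\log n$. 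If invoking the cited corollary in exactly that form turns out to be awkward, the fallback is to prove the $O(n^2)$ bound for $\bar I \otimes A$ directly by a single Lupanov-style block decomposition of the whole $n^2 \times n^2$ matrix $\bar I \otimes A$, using that each row of $\bar I$ has a very structured (low-information) support; but routing everything through the three-multiplication factorization and the existing corollary is cleaner and is the approach I would write up.
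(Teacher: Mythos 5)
Your proposal is correct and follows the paper's own proof essentially verbatim: factor $\bar{I}=PQ^\top$ with inner dimension $\rank_\OR(\bar{I})=\Theta(\log n)$, compute $(A(XQ))P^\top$ as three multiplications each having $O(\log n)$ as one dimension, and invoke Lupanov's technique (Jukna's Corollary~1.35) to bring each stage down to $O(n^2)$ wires. The middle-stage concern you flag resolves immediately without any cross-copy amortization: the $r=O(\log n)$ copies of $x\mapsto Ax$ act on pairwise disjoint sets of gates, so applying the standard Lupanov bound of $O(n^2/\log n)$ wires to each copy separately already yields $O(n^2)$ in total.
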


\subsection{Lower bound for \SUM-circuits}

Intuitively, since low-rank decompositions are not available for $\bar{I}$ in the semiring of non-negative integers, a \SUM-circuit for $\bar{I}\otimes A$ should be forced to compute $\rank_\SUM(\bar{I})=n$ independent copies of $A$. More generally, we ask
\begin{myquestion}
Do we have $C_\SUM(B\otimes A) \geq \rank_\SUM(B) \cdot C_\SUM(A)$ for all $A$, $B$?
\end{myquestion}

Alas, we can answer this affirmatively only in some special cases. For example, the trivial case $B=I$ was discussed above (\ref{eq:mult-lb}), and it is not hard to generalise the argument to show that the lower bound holds in case $B$ admits a fooling set of size $\rank_\SUM(B)$. (When $B$ is viewed as an incidence matrix of a bipartite graph, a \emph{fooling set} is a matching no two of whose edges induce a 4-cycle. See~\cite[\defaultS1.3]{kushilevitz97communication}.) However, since this will not be the case when $B=\bar{I}$, we will settle for the following version, which suffices for the separation result.
\begin{theorem} \label{thm:st-free-lb}
For all $(s,t)$-free $A$,
\begin{equation}
C_\SUM(B\otimes A) \geq \rank_\SUM(B) \cdot\frac{|A|}{st}.
\label{eq:lb}
\end{equation}
\end{theorem}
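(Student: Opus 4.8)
The plan is to lower-bound the number of wires in a \SUM-circuit $\Ci$ for $B\otimes A$ by first extracting, for each of the $\rank_\SUM(B)$ ``rectangles'' in an optimal non-overlapping decomposition of $B$, a disjoint chunk of the circuit that is forced to behave like a \SUM-circuit for $A$, and then applying the Mehlhorn--Pippenger bound (Theorem~\ref{thm:mehlhorn-pippenger}) to each chunk. Concretely, recall that $B\otimes A$ computes $X\mapsto AXB^\top$, so the input variables are indexed by pairs $(j,\ell)$ and the outputs by pairs $(i,k)$; the output gate $y_{(i,k)}$ has support $\Set{x_{(j,\ell)}}{a_{ij}=1,\ b_{k\ell}=1}$. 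Because $\Ci$ is a \SUM-circuit, the supports behave like a \emph{partition} structure: for each output gate there is at most one directed path from each relevant input, and along any gate $g$ the multiset of reachable inputs never collides. This disjointness is the structural fact I would exploit throughout.

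First I would fix the column index $\ell$ of the input block and the row index $k$ of the output block, and consider the restriction of $\Ci$ to the wires that lie on \emph{some} path from an input $x_{(\cdot,\ell)}$ to an output $y_{(\cdot,k)}$. For a fixed pair $(k,\ell)$ with $b_{k\ell}=1$, this restriction is (essentially) a \SUM-circuit computing $z\mapsto Az$ on the $n$ inputs $x_{(1,\ell)},\dots,x_{(n,\ell)}$ and the $n$ outputs $y_{(1,k)},\dots,y_{(n,k)}$. Next, the key combinatorial step: since $\rank_\SUM(B)$ is exactly the minimum number of non-overlapping all-$1$ rectangles covering the $1$-entries of $B$, and a \SUM-circuit for $B\otimes A$ induces (via the disjointness of supports) a decomposition of the $1$-entries of $B$ into at least $\rank_\SUM(B)$ combinatorial rectangles, I can select $\rank_\SUM(B)$ pairs $(k_1,\ell_1),\dots,(k_r,\ell_r)$, one from each rectangle, such that the corresponding restricted subcircuits use \emph{pairwise disjoint} sets of wires. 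The disjointness is forced because in a \SUM-circuit a wire carrying variable $x_{(j,\ell)}$ toward output block $k$ cannot simultaneously be reused to carry $x_{(j',\ell')}$ toward output block $k'$ without creating an illegal collision in some support — this is precisely the ``non-overlapping rectangles'' condition translated into circuit language.

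Having isolated $r=\rank_\SUM(B)$ mutually wire-disjoint subcircuits, each computing (a sub-transformation induced by) the full matrix $A$, I would apply Theorem~\ref{thm:mehlhorn-pippenger}: since $A$ is $(s,t)$-free, each such subcircuit has at least $|A|/(st)$ wires. Summing over the $r$ disjoint pieces gives $|\Ci|\ \ge\ r\cdot |A|/(st)\ =\ \rank_\SUM(B)\cdot |A|/(st)$, which is exactly~(\ref{eq:lb}). I expect the main obstacle to be making the ``rectangle extraction'' step rigorous — i.e.\ showing that an arbitrary \SUM-circuit for $B\otimes A$ really does force $\rank_\SUM(B)$ wire-disjoint copies of a circuit for $A$, rather than merely $\rank_\SUM(B)$ copies that might share wires in subtle ways. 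The cleanest route is to argue contrapositively: associate to $\Ci$ a cover of $B$'s $1$-entries by the ``reachability rectangles'' $R_g=\{(k,\ell): g \text{ lies on an } x_{(\cdot,\ell)}\!\to\! y_{(\cdot,k)} \text{ path}\}$ of its gates, observe that \SUM-semantics forces these rectangles to be all-$1$ and non-overlapping in the relevant sense, hence there must be $\ge\rank_\SUM(B)$ of them carrying distinct wire sets, and then restrict $\Ci$ to each. Verifying that the restricted object is genuinely a (legal) \SUM-circuit for $A$ — in particular that it still computes \emph{every} entry of $A$, using the hypothesis $b_{k\ell}=1$ so that the full row/column of $A$ is ``switched on'' in that block — is the delicate bookkeeping I would need to carry out carefully.
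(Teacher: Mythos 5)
Your argument hinges on the claim that from an optimal non-overlapping rectangle decomposition of $B$ you can pick $r=\rank_\SUM(B)$ pairs $(k_1,\ell_1),\dots,(k_r,\ell_r)$ whose restricted subcircuits are pairwise wire-disjoint, and your justification --- that a wire carrying $x_{(j,\ell)}$ toward output block $k$ cannot also carry $x_{(j',\ell')}$ toward block $k'$ without an ``illegal collision'' --- is not correct. In a \SUM-circuit a gate $g$ may perfectly legally have variables from several columns in its support and feed into outputs in several row blocks: if $g$ lies on an $x_{(\cdot,\ell)}\!\to\!y_{(\cdot,k)}$ path and on an $x_{(\cdot,\ell')}\!\to\!y_{(\cdot,k')}$ path, the only consequence is that $b_{k\ell}=b_{k\ell'}=b_{k'\ell}=b_{k'\ell'}=1$; no support becomes a non-disjoint union. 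So wires can be shared between any two pairs spanning an all-$1$ $2\times 2$ submatrix of $B$, and such a $2\times 2$ pattern can straddle two different rectangles of your chosen decomposition, defeating the selection. Ruling this out is exactly the fooling-set condition: the paper notes the direct-sum bound $C_\SUM(B\otimes A)\geq\rank_\SUM(B)\cdot C_\SUM(A)$ (which your argument would yield) can be proved when $B$ has a fooling set of size $\rank_\SUM(B)$, but poses the general statement as an open question, and the matrix $B=\bar{I}$ needed for the separation has no such fooling set. The same flaw infects your ``cleanest route'': the reachability rectangles $R_g$ are indeed all-$1$ rectangles of $B$, but they are badly overlapping (many gates serve the same pair $(k,\ell)$), and an overlapping cover only certifies $\rank_\OR(B)$, not $\rank_\SUM(B)$.

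The paper's proof avoids extracting disjoint copies of circuits for $A$ altogether. It first promotes every gate whose support lies in a $t$-wide row cylinder $Y\times X_B$, $|Y|\leq t$, to an input gate, and then only counts \emph{input wires}. The non-overlapping cover of $B$ is obtained at a much finer granularity: for a fixed output row $i$ of $A$ and a fixed single variable $x\in A_i$, each input wire $w$ used by $\Ci_i$ on the variables $\{x\}\times X_B$ gives a rectangle $\cosupp_x(w)\times\supp_x(w)$, and because each variable $(x,y)$ reaches each output $A_i\times B_j$ through exactly one input wire, these rectangles \emph{partition} the ones of $B$, forcing at least $\rank_\SUM(B)$ input wires for each pair $(i,x)$. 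The losses $1/t$ (an input wire is $t$-wide, so it is counted for at most $t$ choices of $x$) and $1/s$ (by $(s,t)$-freeness, an input wire feeds a gate spanning more than $t$ rows and hence occurs in at most $s$ subcircuits $\Ci_i$) are precisely what replaces your black-box use of Theorem~\ref{thm:mehlhorn-pippenger} and what makes the statement provable without the open direct-sum claim. To repair your proposal you would need either to establish the disjoint-selection step (i.e., essentially resolve the paper's direct sum question for $\bar{I}$) or to retreat to a counting scheme of this localized kind.
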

Note that if we set $B=I$ in Theorem \ref{thm:st-free-lb} we recover essentially Theorem~\ref{thm:mehlhorn-pippenger}.

For the purposes of the proof we switch to the combinatorial perspective: For $A$ and $B$ we introduce two sets of $n$ formal variables $X_A$ and $X_B$. Moreover, we let $A_1,\ldots,A_n\subseteq X_A$ and $B_1,\ldots,B_n\subseteq X_B$ denote the associated outputs. That is, each output $A_i$ is defined by one row of $A$, and each output $B_j$ is defined by one row of $B$. With this terminology, the input variables for $B\otimes A$ are the pairs in $X_A\times X_B$; we think of $X_A$ as indexing the rows and $X_B$ as indexing columns of the variable matrix $X_A\times X_B$. Finally, $B\otimes A$ corresponds to computing the $n^2$ outputs
\[
A_i\times B_j,\quad\text{for } i,j\in [n].
\]

In the following proof we use the $(s,t)$-freeness of $A$ to ``zoom in'' on that layer of the circuit which reveals the large wire complexity (similarly to Mehlhorn~\cite{mehlhorn79some}).  We advise the reader to first consider the case $s=t=1$, as this already contains the main idea of the proof. 
\begin{proof}[Theorem~\ref{thm:st-free-lb}]
Let $\Ci$ be a \SUM-circuit computing $B\otimes A$. As a first step, we simplify $\Ci$ by allowing input gates to have larger-than-singleton supports. Namely, let $F$ consist of those gates of $\Ci$ whose supports are contained in a \emph{$t$-wide row cylinder} of the form $Y\times X_B$ where $Y\subseteq X_A$ and $|Y|\leq t$. We simply declare that all computations done by gates in $F$ come for free: we promote a gate in $F$ to an input gate and delete all its incoming wires. We continue to denote the modified circuit by $\Ci$---clearly, these modifications only decrease its wire complexity.

Call a wire that is connected to an input gate an \emph{input wire} and denote the set of input wires by $W$. The wire complexity lower bound (\ref{eq:lb}) will follow already from counting the number $|W|$ of input wires.

For $i\in[n]$ denote by $\Ci_i$ the subcircuit of $\Ci$ computing the $n$ outputs $A_i\times B_j$, $j\in[n]$, and denote by $W(i)$ the input wires of $\Ci_i$; we claim that
\begin{equation} \label{eq:claim}
|W(i)| \geq \rank_\SUM(B) \cdot \frac{|A_i|}{t}.
\end{equation}
Before we prove (\ref{eq:claim}), we note how it implies the theorem. Each input wire $w\in W$ is feeding into a non-input gate having their support not contained in a $t$-wide row cylinder. Due to $(s,t)$-freeness of $A$ this means that $w$ can appear only in at most $s$ different $\Ci_i$. Thus, the sum $\sum_i|W(i)|$ counts $w$ at most $s$ times and, more generally, we have
\[
 |W| = \biggl|\bigcup_{i=1}^n W(i)\biggr| \geq \sum_{i=1}^n \frac{|W(i)|}{s},
\]
which implies (\ref{eq:lb}) given (\ref{eq:claim}).

\vspace{1em}
\noindent
\emph{Proof of (\ref{eq:claim}).}
Fix $i\in[n]$. If $A_i$ is empty the claim is trivial. Otherwise fix a variable $x\in A_i$ and consider the structure of $\Ci_i$ when restricted to the variables $\{x\}\times X_B$. Since this set of variables can be naturally identified with $X_B$ by ignoring the first coordinate, we can view $\Ci_i$ as computing a copy of $B$ on the variables $\{x\}\times X_B$.

Indeed, we define the \emph{$x$-support} $\supp_x(w)$ of an input wire $w\in W(i)$ to be the set of $y\in X_B$ such that the variable $(x,y)$ is contained in the support of $w$. (The support of $w$ is simply the support of the adjacent input gate.) Moreover, we let
\[
 W_x(i) := \{ w\in W(i) : \supp_x(w)\neq \varnothing\}.
\]
Put otherwise, $W_x(i)$ consists of the input wires that are used by $\Ci_i$ in computing a copy of $B$ on the variables $\{x\}\times X_B$.  Associate to each $w\in W_x(i)$ a rectangle
\[
 R_x(w) := \cosupp_x(w)\times\supp_x(w),
\]
where $\cosupp_x(w)$ is the set of $j\in[n]$ such that $w$ appears in the subcircuit $\Ci_{ij}$ of $\Ci_i$ that computes the output $A_i\times B_j$. Now, the crucial observation is that the collection of rectangles $\{R_x(w) : w\in W_x(i)\}$ is a non-overlapping cover of $B$, because $\Ci_i$ computes a copy of $B$ by taking disjoint unions of the supports $\{\supp_x(w) : w\in W_x(i)\}$. Therefore, we must have that
\begin{equation} \label{eq:cc-bound}
 |W_x(i)| \geq \rank_\SUM(B).
\end{equation}

To finish the proof, we note that a single input wire $w\in W(i)$, being $t$-wide, can only be contained in the sets $W_x(i)$ for at most $t$ different $x\in A_i$. Thus, the sum $\sum_x |W_x(i)|$ counts $w$ at most $t$ times and, more generally, we have
\[
 |W(i)|=\biggl|\bigcup_{x\in A_i} W_x(i)\,\biggr|\geq \sum_{x\in A_i} \frac{|W_x(i)|}{t},
\]
which implies (\ref{eq:claim}) given (\ref{eq:cc-bound}).
\end{proof}
As will be shortly discussed in Section~\ref{sec:uniform-motivation}, a random matrix $\A\in\{0,1\}^{n\times n}$ is $O(\log n)$-free and has weight $|\A|=\Theta(n^2)$ w.h.p. Using these facts we obtain the following corollary, which, together with Lemma~\ref{lemma:upper-bound}, proves Theorem~\ref{thm:tensor}.
\begin{corollary}
A random $\A$ satisfies $C_\SUM(\bar{I}\otimes \A) = \Omega(n^3/\log^2 n)$ w.h.p. \qed
\end{corollary}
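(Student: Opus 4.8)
The plan is to obtain the corollary as a direct instantiation of Theorem~\ref{thm:st-free-lb} with $B=\bar{I}$, for which the gap $\rank_\SUM(\bar{I})=n$ is already recorded above. It then remains only to verify the two probabilistic ingredients: that a uniformly random $\A\in\{0,1\}^{n\times n}$ (i) has weight $|\A|=\Theta(n^2)$, and (ii) is $(s,t)$-free with $s=t=O(\log n)$, both with high probability. Granting these, plugging $B=\bar{I}$ into (\ref{eq:lb}) gives $C_\SUM(\bar{I}\otimes\A)\geq \rank_\SUM(\bar{I})\cdot |\A|/(st)=\Omega\bigl(n\cdot n^2/\log^2 n\bigr)=\Omega(n^3/\log^2 n)$, which is exactly the claim; together with Lemma~\ref{lemma:upper-bound} this yields Theorem~\ref{thm:tensor} after setting $N=n^2$.

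For (i), the entries of $\A$ are independent fair coin flips, so $|\A|$ is a sum of $n^2$ independent $\mathrm{Bernoulli}(1/2)$ variables with mean $n^2/2$; a Chernoff/Hoeffding bound gives $|\A|\geq n^2/4$ except with probability $e^{-\Omega(n^2)}$. For (ii), fix $k=\lceil c\log n\rceil$ for a constant $c$ to be chosen. The probability that a prescribed $k\times k$ submatrix of $\A$ is all-$1$ is $2^{-k^2}$, and there are at most $\binom{n}{k}^2\leq n^{2k}$ such submatrices, so the expected number of all-$1$ $k\times k$ submatrices is at most $n^{2k}2^{-k^2}=2^{k(2\log n - k)}$. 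Choosing $c>2$ makes this $o(1)$, so by Markov's inequality $\A$ contains no all-$1$ $k\times k$ submatrix w.h.p., i.e.\ $\A$ is $(k-1,k-1)$-free, hence $(s,t)$-free with $s=t=k-1=O(\log n)$.

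A union bound combines the two events, each holding with high probability, so a random $\A$ simultaneously satisfies $|\A|=\Theta(n^2)$ and is $O(\log n)$-free; substituting $st=O(\log^2 n)$ and $|\A|=\Theta(n^2)$ into (\ref{eq:lb}) with $B=\bar{I}$ gives the stated bound. The step I expect to be most delicate is the calibration in (ii): the constant $c$ must be large enough that the union bound over the $n^{2k}$ submatrices beats the $2^{-k^2}$ success probability, while $k$ stays small enough that $st=O(\log^2 n)$ still delivers $\Omega(n^3/\log^2 n)$. Since any $c>2$ meets both demands, this is not a genuine obstacle, and the corollary is essentially a bookkeeping consequence of Theorem~\ref{thm:st-free-lb} once the two w.h.p.\ estimates are in place.
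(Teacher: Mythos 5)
Your proposal is correct and matches the paper's argument: the corollary is obtained exactly as you describe, by plugging $B=\bar{I}$ with $\rank_\SUM(\bar{I})=n$ into Theorem~\ref{thm:st-free-lb} and using the standard facts that a uniformly random $\A$ has $|\A|=\Theta(n^2)$ and is $O(\log n)$-free w.h.p.\ (the paper establishes the freeness via the same union bound, in Section~\ref{sec:uniform-motivation}). Your use of a Chernoff bound for the weight instead of the paper's Chebyshev estimate is an immaterial variation.
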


\section{\OR/\XOR-Separation} \label{sec:or-xor-sep}

In this section we use the probabilistic method to construct $k$-uniform matrices $\A$ that, for large enough $k$, will witness the following complexity gap with high probability:
\begin{align*}
C_\XOR(\A) &= O(n), \\
C_\OR(\A)  &= \Omega(n^2/\log^2 n).
\end{align*}
In what follows, all matrix arithmetic will be over $\F=\GF(2)$.

\subsection{Motivation for \texorpdfstring{$k$}{k}-uniform matrices} \label{sec:uniform-motivation}

Suppose first that $\A\in\F^{n\times n}$ is a random matrix where each entry is drawn uniformly and independently from $\F$. The probability that $\A$ fails to be $(k-1)$-free can be bounded from above by taking the union bound over all possible $k\times k$ submatrices:
\begin{equation} \label{eq:up}
\Prob{\,\A\text{ is not $(k-1)$-free}\,}\leq\binom{n}{k}^2 2^{-k^2}.
\end{equation}
It is easy to check (and well-known in the context of random graphs~\cite[\defaultS11]{bollobas01random}) that for $k \geq 2\log n$ this quantity tends to 0 as $n\to\infty$.

Our key observation here is that the estimate (\ref{eq:up}) only uses the property that the entries in each $k\times k$ submatrix of $\A$ are mutually independent. Indeed, the above analysis holds even when $\A$ is only $k$-uniform for $k\geq 2\log n$. Thus, we have the following lemma.
\begin{lemma}\label{lem:uniform-implies-free}
If $\A$ is $k$-uniform for $k\geq 2\log n$, then w.h.p.,
\[
C_\OR(\A)=\Omega(n^2/\log^2 n).
\]
\end{lemma}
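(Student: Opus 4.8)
The plan is to combine the union-bound estimate (\ref{eq:up}) with the Mehlhorn--Pippenger theorem. First I would observe that, since $\A$ is $k$-uniform for $k\geq 2\log n$, every fixed $k\times k$ submatrix is uniformly distributed on $\{0,1\}^{k\times k}$, so the probability that this particular submatrix is all-ones is exactly $2^{-k^2}$. Taking a union bound over the $\binom{n}{k}^2$ choices of $k$ rows and $k$ columns gives exactly (\ref{eq:up}), and the point emphasised in the excerpt is that this argument needs only the marginal independence inside each $k\times k$ window, not full independence of all entries of $\A$. For $k=\lceil 2\log n\rceil$ one checks $\binom{n}{k}^2 2^{-k^2}\leq n^{2k}2^{-k^2}=2^{k(2\log n-k)}\leq 2^0=1$, and in fact the exponent is strictly negative and grows, so the bound tends to $0$; hence w.h.p.\ $\A$ is $(k-1)$-free with $k-1 = \lceil 2\log n\rceil - 1 = O(\log n)$.

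Next I would control the weight $|\A|$. Each entry of $\A$ is $1$ with probability $1/2$ (a $1\times 1$ submatrix is already uniform), so $\mathbb{E}|\A| = n^2/2$. Although the entries need not be globally independent, they are pairwise independent (any $2\times 2$ submatrix, in particular any pair of entries, is uniform and hence independent), so $\mathrm{Var}(|\A|) = \sum_{i,j}\mathrm{Var}(\A_{ij}) = n^2/4$, and Chebyshev gives $|\A| = \Omega(n^2)$, say $|\A|\geq n^2/4$, with probability $1-o(1)$.

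Finally, condition on the intersection of these two high-probability events. Setting $s=t=k-1=O(\log n)$, the matrix $\A$ is $(s,t)$-free, so Theorem~\ref{thm:mehlhorn-pippenger} yields
\[
C_\OR(\A)\ \geq\ \frac{|\A|}{st}\ \geq\ \frac{n^2/4}{O(\log^2 n)}\ =\ \Omega\!\left(\frac{n^2}{\log^2 n}\right),
\]
which is the claimed bound. The only mild subtlety — and the step I would be most careful about — is making sure the two events (being $(k-1)$-free and having weight $\Omega(n^2)$) hold \emph{simultaneously} with probability $1-o(1)$, which is immediate by a union bound over their complements since each complement has probability $o(1)$; everything else is the routine arithmetic with binomial coefficients and Chebyshev's inequality already sketched in Section~\ref{sec:uniform-motivation}.
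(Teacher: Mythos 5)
Your proposal is correct and follows essentially the same route as the paper's own proof: $k$-uniformity gives the union bound (\ref{eq:up}) and hence $O(\log n)$-freeness w.h.p., pairwise independence of the entries plus Chebyshev gives $|\A|=\Theta(n^2)$ w.h.p., and Theorem~\ref{thm:mehlhorn-pippenger} then yields the bound. One small nit: with the crude estimate $\binom{n}{k}\le n^k$ your exponent $k(2\log n-k)$ is only $\le 0$ (not strictly negative) when $2\log n$ is an integer and $k=\lceil 2\log n\rceil=2\log n$, so to get the probability to tend to $0$ you should use $\binom{n}{k}\le (en/k)^k$, which contributes the extra factor $(e/k)^{2k}=o(1)$; with that adjustment everything stands.
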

\begin{proof}
Any $2$-uniform matrix $\A$ has pairwise independent entries so that $|\A|=\Theta(n^2)$ w.h.p.\ by Chebyshev's inequality. On the other hand, the above discussion implies that $\A$ is $2\log n$-free w.h.p. Thus, the claim follows from Theorem~\ref{thm:mehlhorn-pippenger}.

\end{proof}

Corollary~\ref{cor:or-xor-sep} is a consequence of Lemma~\ref{lem:uniform-implies-free} and Theorem~\ref{thm:k-uniform}. Thus, our remaining goal in this section is to prove Theorem \ref{thm:k-uniform}. 

\subsection{Proof of Theorem~\ref{thm:k-uniform}}

Let $m:=O(\sqrt{n})$. To construct a $k$-uniform matrix $\A$ we start with an $m\times n$ matrix $P$ that satisfies the following two properties:
\begin{enumerate}[label=(\arabic*),noitemsep]
\item $P$ has linear \XOR-complexity, $C_\XOR(P)=O(n)$.
\item Each set of $k=n^{\Omega(1)}$ columns of $P$ are linearly independent.
\end{enumerate}
Miltersen~\cite{miltersen98error} shows that such $P$ can be obtained as submatrices of certain generating matrices of linear codes, e.g., those of Spielman~\cite{spielman96linear}.
\begin{theorem}[{Miltersen~\cite[Theorem 1.4]{miltersen98error}}] \label{thm:miltersen}
Let $D\subseteq\F^n$. There are $O(\log|D|)\times n$ matrices $P$ with $C_\XOR(P)=O(n)$ such that the mapping $x\mapsto Px$ is injective on $D$.
\end{theorem}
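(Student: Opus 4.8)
The plan is to prove this as a \emph{linear perfect hashing} statement: first find some linear $P\colon\F^n\to\F^r$ with $r=O(\log|D|)$ that is injective on $D$, and then arrange for $P$ to additionally have an $O(n)$-wire circuit by building it out of a good error-correcting code. The first step I would take is to reformulate injectivity: $x\mapsto Px$ is injective on $D$ if and only if the kernel of $P$ contains no nonzero element of the difference set $S:=(D-D)\setminus\{0\}$, and $|S|<|D|^2$. A pure union bound already handles the dimension: for a uniformly random $r\times n$ matrix $P$ over $\F$ and a fixed nonzero $v$ we have $\Pr[Pv=0]=2^{-r}$, so $\Pr[\ker P\cap S\neq\varnothing]<|D|^2\,2^{-r}<1$ once $r=2\log|D|+1$; hence a matrix $P$ with $r=O(\log|D|)$ rows that is injective on $D$ exists. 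The catch is that a random such $P$ has $\XOR$-complexity $\Theta(rn/\log n)$, which is far larger than the target $O(n)$.

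To control the circuit size I would replace the random matrix by \emph{a good code followed by a random coordinate projection}. Using Spielman's construction~\cite{spielman96linear}, fix a generator matrix $G\in\F^{Cn\times n}$ of a linear-time encodable code of constant rate $1/C$ and constant relative minimum distance $\delta>0$, so that $C_\XOR(G)=O(n)$ and every nonzero $v$ satisfies $|Gv|\geq\delta C n$. Sample coordinates $i_1,\dots,i_r\in[Cn]$ uniformly, let $\Pi\in\F^{r\times Cn}$ be the corresponding projection, and set $P:=\Pi G$. For a fixed nonzero $v\in S$, the map $P$ sends $v$ to $0$ only if all $r$ sampled coordinates miss the support of $Gv$, which happens with probability at most $(1-\delta)^r$; a union bound over $S$ gives $\Pr[\exists v\in S\colon Pv=0]\leq|D|^2(1-\delta)^r<1$ for a suitable $r=O(\log|D|)$, with the hidden constant depending only on $\delta$. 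So for some choice of coordinates $P=\Pi G$ is injective on $D$.

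It remains to bound $C_\XOR(P)$. The circuit first computes the full codeword $u=Gv$ using $O(n)$ wires, and the $r$ outputs of $P$ are just the gates $u_{i_1},\dots,u_{i_r}$, adding at most $r$ further wires. Since $D\subseteq\F^n$ forces $\log|D|\leq n$, we get $r=O(\log|D|)=O(n)$ and therefore $C_\XOR(P)=O(n)$, which completes the argument.

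The genuinely substantial ingredient is the existence of the underlying code---a family that is simultaneously constant-rate, has constant relative distance, and admits an $O(n)$-wire linear encoder---but this is exactly Spielman's theorem, which we use as a black box; the difference-set reformulation, the union bound, and the wire count are all routine. The one place to be careful is that the weight bound $|Gv|\geq\delta Cn$ must hold for \emph{every} nonzero $v\in\F^n$, which requires $G$ to be injective as well as of distance $\delta Cn$ (automatic for a generator matrix of such a code), and that the random projection is applied to the $Cn$-bit codeword rather than to the $n$-bit input, so that the weight-$\delta Cn$ guarantee is what drives the union bound.
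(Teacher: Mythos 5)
Your proof is correct: the reduction of injectivity to avoiding the difference set, the choice of a Spielman-type linear-time encodable code of constant rate and constant relative distance, the random selection of $O(\log|D|)$ codeword coordinates with a union bound, and the $O(n)$ wire count all go through without gaps. Note that the paper itself offers no proof of this statement—it is imported verbatim from Miltersen—and your argument is essentially Miltersen's original one (good code composed with a small random coordinate projection), so there is no divergence to report.
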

Indeed, let $D\subseteq\F^n$ be the set of vectors of Hamming weight at most $k$. Note that if $P$ is injective on $D$, then it clearly has property (2). We also have that $|D|\leq (n+1)^k$ and so $\log|D|= O(k\log n)$. Thus, if we set $k:= \sqrt{n}/\log n$, we can apply Theorem~\ref{thm:miltersen} to obtain our desired $m\times n$ matrix $P$.

We can now define
\[
\A:= P^\top \R P,
\]
where $\R\in \F^{m\times m}$ is a matrix chosen uniformly at random; note that
$C_\XOR(\R)\leq |\R| \leq m^2=O(n)$. If we compute $\A$ in three stages in the obvious way, we obtain
\[
C_\XOR(\A) \leq C_\XOR(P^\top)+C_\XOR(\R) + C_\XOR(P) = O(n),
\]
where we used the fact that $C_\XOR(P^\top) = O(C_\XOR(P))$---roughly, this follows from simply reversing the direction of the wires in a \XOR-circuit computing $P$ (see Jukna~\cite[p.\ 46]{jukna12boolean}).

It remains to show that $\A$ is $k$-uniform. In fact, since our definition of $\A$ is a generalisation of how $k$-wise independent variables are typically constructed~\cite[\defaultS15.2]{alon00probabilistic}, the proof of the following lemma is somewhat routine.
\begin{lemma}
$\A$ is $k$-uniform.
\end{lemma}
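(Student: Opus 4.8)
The plan is to show that for any fixed set $S$ of $k$ entry-positions in $\A = P^\top \R P$, the random vector $(\A_{ab})_{(a,b)\in S}$ is uniform on $\F^S$, where $\R$ is uniform on $\F^{m\times m}$. By definition of the Kronecker/matrix product, each entry $\A_{ab} = p_a^\top \R \, p_b$, where $p_a, p_b \in \F^m$ denote the $a$-th and $b$-th columns of $P$. Viewing $\R$ as a vector of $m^2$ uniform random bits, the map $\R \mapsto p_a^\top \R\, p_b$ is $\F$-linear, so $(\A_{ab})_{(a,b)\in S}$ is the image of the uniform distribution under a fixed linear map $L\colon \F^{m\times m}\to\F^S$. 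A linear image of the uniform distribution on a vector space is uniform on the image, so it suffices to prove that $L$ is surjective, i.e.\ that the $|S|$ bilinear forms $\R\mapsto p_a^\top\R\, p_b$ indexed by $(a,b)\in S$ are linearly independent as functionals on $\F^{m\times m}$.

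The key step is thus a linear-independence claim, and this is where property (2) of $P$ enters. Identify the functional $\R \mapsto p_a^\top \R\, p_b = \sum_{i,j} (p_a)_i (p_b)_j \R_{ij}$ with the rank-one matrix $p_a p_b^\top \in \F^{m\times m}$ under the trace pairing $\langle M, \R\rangle = \sum_{i,j} M_{ij}\R_{ij}$. So I must show that the matrices $\{p_a p_b^\top : (a,b)\in S\}$ are linearly independent in $\F^{m\times m}$. Let $U = \{a : (a,b)\in S \text{ for some }b\}$ and $V = \{b : (a,b)\in S\text{ for some }a\}$ be the sets of row- and column-indices appearing in $S$; then $|U|, |V| \le k$, so by property (2) the columns $\{p_a : a\in U\}$ are linearly independent and likewise $\{p_b : b\in V\}$. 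Extend each family to a basis and use the standard fact that if $\{u_a\}_{a\in U}$ and $\{v_b\}_{b\in V}$ are each linearly independent, then the rank-one products $\{u_a v_b^\top\}_{(a,b)\in U\times V}$ are linearly independent in $\F^{m\times m}$ (they are, up to a change of basis on each side, a subset of the standard basis $\{E_{ab}\}$). Since $S \subseteq U\times V$, the subfamily indexed by $S$ is linearly independent as well, which is exactly what we need.

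Finally I would assemble the pieces: the functional $L = (\langle p_a p_b^\top, \cdot\rangle)_{(a,b)\in S}$ is surjective onto $\F^S$ because its "rows" $p_a p_b^\top$ are linearly independent, hence $L(\R)$ is uniform on $\F^S$ when $\R$ is uniform on $\F^{m\times m}$; that is, every $k\times k$ submatrix of $\A$ (indeed every set of $k$ entries) has the uniform marginal, so $\A$ is $k$-uniform in the sense of Definition~\ref{def:uniformity}. I expect the only real obstacle to be stating the product-of-independent-families fact cleanly — everything else is bookkeeping; one can either cite it as the elementary linear-algebra fact that the Kronecker product of two full-column-rank matrices has full column rank, or give the one-line change-of-basis argument reducing to disjoint standard basis vectors $E_{ab}$. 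Note this argument uses only that the entries of $\R$ are uniform over an arbitrary field and property (2) of $P$; the characteristic-$2$ assumption plays no special role, matching the remark in the excerpt that this generalises the usual construction of $k$-wise independent bits.
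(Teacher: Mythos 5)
Your proof is correct, but it takes a genuinely different route from the paper. You linearize in $\R$: each entry $\A_{ab}=p_a^\top\R\,p_b$ is a linear functional of $\R$, identified under the trace pairing with the rank-one matrix $p_ap_b^\top$, and you reduce uniformity of the marginal to surjectivity of the corresponding linear map, which follows from the fact that $\{p_ap_b^\top:(a,b)\in U\times V\}$ is linearly independent whenever $\{p_a\}_{a\in U}$ and $\{p_b\}_{b\in V}$ are (the Kronecker product of full-column-rank matrices has full column rank). The paper instead argues probabilistically in two stages: it first shows $\B:=\R P_J$ is uniform on $\F^{m\times k}$ (rows of $\R$ are independent, and each row $\R_iP_J$ is uniform by a translation trick --- for any target $y$ pick $x$ with $xP_J=y$ and use that $\R_i-x$ has the same distribution as $\R_i$, relying only on $P_J$ having independent columns), and then repeats the same argument on the left with $P_I^\top$. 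Your version is a single linear-algebra computation and makes explicit a slightly sharper fact: any collection of entries whose row-index set and column-index set each have size at most $k$ is jointly uniform, over any field; the cost is that you need the Kronecker/rank-one independence lemma, which the paper's two-stage shift argument avoids entirely. One small wording point: your opening sentence speaks of ``a set $S$ of $k$ entry-positions,'' whereas $k$-uniformity concerns $k\times k$ submatrices with $k^2$ entries; your argument handles these because it only ever uses $|U|,|V|\le k$, as your closing sentence acknowledges, but the statement of the plan should be phrased that way from the start.
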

\begin{proof}
We need to show that each submatrix $\A_{I\times J}$, where $I,J\subseteq[n]$ and $|I|=|J|=k$, is uniformly distributed in $\F^{k\times k}$. Write
\[
\A_{I\times J} = {P_I}^\top \R P_J,
\]
where $P_K$ is the submatrix of $P$ consisting of the columns with indices in $K\subseteq[n]$.

\begin{claim}
$\B := \R P_J$ is uniformly distributed in $\F^{m\times k}$.
\end{claim}
\noindent\emph{Proof of Claim.}
Let $\B_i = \R_i P_J$ denote the $i$-th row of $\B$. The rows $\B_i$, $i\in [m]$, are mutually independent variables, since the variables $\R_i$, $i\in [m]$, are. Therefore it suffices to show that $\B_i$ is uniformly distributed in $\F^{1\times k}$ for each $i\in[m]$. 

To this end, fix $i\in[m]$; we show that all the outcomes $\B_i = y$ where $y\in \F^{1\times k}$ are equally likely. For any $y\in \F^{1\times k}$ there is a vector $x\in \F^{1\times m}$ with $xP_J=y$ since $P_J$ has linearly independent columns. Hence $\R_iP_J = y$ iff $(\R_i-x)P_J = 0$. But $\R_i-x$ is distributed the same as $\R_i$ so that $\Prob{\R_iP_J = y} = \Prob{\R_iP_J = 0}$ is independent of the choice of $y$, as desired.\hfill{\large$\diamond$}

\medskip

Finally, the same analysis as above demonstrates that $\A_{I\times J} = {P_I}^\top \B$ is uniformly distributed in $\F^{k\times k}$ proving the lemma.
\end{proof}

\begin{remark}
Interestingly, Theorem~\ref{thm:mehlhorn-pippenger} is unable to prove a better lower bound than $C_\OR(A)=\Omega(n^2/\log^2 n)$ for any matrix $A$. Is it true that for every $n^{\Omega(1)}$-uniform $\A$, we have that $C_\OR(\A)=\Theta(n^2/\log n)$ w.h.p.? A positive answer would give the tight bound $\gap_{\OR/\XOR}(n) = \Theta(n/\log n)$.
\end{remark}

\section{Rewriting} \label{sec:rewriting}

In this section we study what would happen if \srewrite{} or \xrewrite{} could be solved in subquadratic time. Namely, we show that this eventuality would contradict the strong exponential time hypothesis. This will prove Theorem~\ref{thm:rewrite}. As discussed in Section~\ref{sec:results}, we interpret this as evidence for our conjectures $\rho_{\SUM/\OR}=n^{1-o(1)}$ and $\rho_{\XOR/\OR}=n^{1-o(1)}$.

\subsection{Preliminaries}
For purposes of computations, we tacitly assume that $|\Ci|\geq n$ for any $n$-input circuit $\Ci$ considered in this section. This is to make each $\Ci$ admit a binary representation of length $\Oh(|\Ci|)$ where the $\Oh$ notation hides factors polylogarithmic in $n$. For concreteness, $\Ci$ might be represented as two lists: (i) the list of gates in $\Ci$, with output gates indicated, and (ii) the list of wires in $\Ci$; both lists are given in topological order, with the input wires of each gate forming a consecutive sublist of the list of wires. Whatever the encoding, we assume it is efficient enough so that the following property holds.

\begin{proposition}
On input an $\X$-circuit $\Ci$ and a vector $x$, the output $\Ci(x)$ can be computed in time $\Oh(\card{\Ci})$ (in the usual RAM model of computation). \qed
\end{proposition}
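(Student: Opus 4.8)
The plan is to evaluate the circuit by a single sequential pass in topological order, reusing the values already computed for a gate's predecessors — the standard linear-time evaluation of a DAG. Allocate an array $v[\cdot]$ indexed by the gates of $\Ci$. Walk through the gate list in the given topological order. When $g$ is the $j$-th input gate, set $v[g] := x_j$. When $g$ is an arithmetic gate, initialise $v[g]$ to the neutral element of the model's operation ($0$ in all three cases, for $\lor$, $+$, and $\oplus$), then scan the contiguous block of wires entering $g$ (contiguous by our encoding assumption), and for each such wire with tail $h$ replace $v[g]$ by $v[g] \circ v[h]$, where $\circ$ is $\lor$, $+$, or $\oplus$ according to the model. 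Because every wire runs from an earlier to a later gate in topological order, $v[h]$ is already set when $g$ is processed. Finally, read $v[\cdot]$ at the designated output gates to produce $\Ci(x)$. Correctness is an immediate induction along the topological order: $v[g]$ equals the value that gate $g$ computes on input $x$.

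For the running time, reading the encoding of $\Ci$ and the vector $x$ takes $\Oh(\card{\Ci})$ by the assumed efficiency of the encoding; initialising the input gates and writing the output cost $O(n)$; and the rest of the computation is $O(\card{\Ci})$ wire visits, each a constant number of array accesses and one arithmetic update. Since every arithmetic gate has in-degree at least $1$ (in-degree-zero vertices are input gates by definition), the number of arithmetic gates is at most the number of wires $\card{\Ci}$, and the $n$ input gates satisfy $n \le \card{\Ci}$ by our standing assumption, so the array size and the number of neutral-element initialisations are $O(\card{\Ci})$.

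The only model-dependent point, and the only one needing a word of care, is the cost of one arithmetic update on gate values. For \OR{} and \XOR{} each $v[g]$ is a single bit and an update is $O(1)$. For \SUM{} each $v[g]$ is a non-negative integer; on a $0$/$1$ input vector (more generally, on inputs of polynomially bounded magnitude) every intermediate value is at most $\poly(n)$ and hence fits in one $O(\log n)$-bit RAM word, so an addition is again $O(1)$; in any case the $\Oh$ notation already absorbs the $O(\log n)$-bit arithmetic. I do not expect any genuine obstacle here: the substantive hypotheses are exactly the two we stipulated in the preliminaries — that the encoding lists gates in topological order with each gate's incoming wires forming a consecutive block (so the whole evaluation is one scan), and that $\card{\Ci} \ge n$ — and the bit-length bound for \SUM-circuits is routine and in any event hidden by $\Oh$.
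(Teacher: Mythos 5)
Your argument is correct and is exactly the standard linear-time topological evaluation that the paper relies on implicitly---the proposition is stated with its proof omitted (the \qed follows the statement directly), precisely because the encoding assumptions in the preliminaries make this routine. Your added remark on the word-size of intermediate \SUM-values is a fine bit of care, and is indeed absorbed by the $\Oh(\cdot)$ notation as you say.
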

The following proposition records a similar observation for circuit rewriting. 
\begin{proposition}
\label{thm:rewrite-factor-g}
Both \srewrite{} and \xrewrite{} can be solved in time $\Oh(\card{\Ci}^2)$.
\end{proposition}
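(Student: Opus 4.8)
\textbf{Proof proposal for Proposition~\ref{thm:rewrite-factor-g}.}

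The plan is to give, for each of the two problems, a direct algorithm that processes the input circuit $\Ci$ gate by gate in topological order, maintaining for every gate $g$ an explicit representation of $\supp(g)$ (in the $\OR$ case) or of the row vector of $\GF(2)$-coefficients that $g$ computes (in the $\XOR$ case), and then producing a depth-$2$ output circuit that writes each output gate directly as a single gate fed by the appropriate subset of the inputs. The key point is that the support / coefficient vector of each gate is an element of $\{0,1\}^n$, and since we are assuming $\card{\Ci}\geq n$, each such vector has a representation of length $\Oh(\card{\Ci})$; there are at most $\card{\Ci}$ gates, so storing all of them costs $\Oh(\card{\Ci}^2)$ space and the whole sweep costs $\Oh(\card{\Ci}^2)$ time.

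Concretely, first I would read $\Ci$ and, walking the gates in topological order, compute the vector $v(g)$ associated with each gate: for an input gate $x_j$ set $v(g)=e_j$; for a non-input gate $g$ with children $g_1,\ldots,g_d$, set $v(g)$ to the bitwise $\OR$ of $v(g_1),\ldots,v(g_d)$ in the $\OR$-case, or to the bitwise $\XOR$ of $v(g_1),\ldots,v(g_d)$ in the $\XOR$-case. Since the input wires of a gate form a consecutive sublist of the wire list, each combination step costs $\Oh(n)$ per incoming wire, for a total of $\Oh(n\cdot\card{\Ci})=\Oh(\card{\Ci}^2)$; observe that in the $\XOR$-case this is exactly the semantics of a $\XOR$-circuit (symmetric difference / addition mod $2$ of children), so $v(g)$ is indeed the $i$-th row of the matrix computed by $\Ci$ when $g$ is the $i$-th output gate, and likewise $v(g)=\supp(g)$ in the $\OR$-case. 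Second, I would emit the output circuit: it has the same $n$ inputs, and for each output gate $y_i$ of $\Ci$ one new gate whose incoming wires are exactly $\{\,x_j : v(y_i)_j=1\,\}$. This output is a legal $\SUM$-circuit (respectively $\XOR$-circuit) computing the same matrix as $\Ci$, and its size is $\sum_i \card{v(y_i)}\leq n\cdot(\#\text{outputs})=\Oh(\card{\Ci}^2)$; note in the $\SUM$-case the depth-$2$ form is forced to be disjoint-union-legal because each wire goes straight from an input to an output, so there is at most one path from any $x_j$ to any $y_i$. Writing this circuit out takes time proportional to its size, i.e.\ $\Oh(\card{\Ci}^2)$.

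There is no serious obstacle here; the only things to be careful about are (i) checking that the emitted circuit really is a valid circuit in the target model --- in particular that the $\SUM$-output is cancellation-free, which holds automatically for the depth-$2$ construction --- and (ii) the bookkeeping that the assumption $\card{\Ci}\geq n$ is what makes the $\Oh(\card{\Ci}^2)$ bound (rather than $\Oh(n\card{\Ci}+\card{\Ci}^2)$) clean, both for the time to manipulate length-$n$ vectors and for the length of the output encoding. I would simply remark on these two points and consider the proposition established.
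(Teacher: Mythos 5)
Your approach is essentially the paper's: the paper's proof also just extracts the matrix $A$ computed by the input circuit in time $\Oh(\card{\Ci}^2)$ and outputs the trivial shallow circuit for $A$, of size at most $n^2\leq\card{\Ci}^2$; your topological sweep with length-$n$ bit-vectors is simply a spelled-out version of that extraction, and your accounting ($\Oh(n\cdot\card{\Ci})=\Oh(\card{\Ci}^2)$ for the sweep, plus the size of the emitted circuit) is fine. One slip to fix: in \emph{both} \srewrite{} and \xrewrite{} the input $\Ci$ is an \OR-circuit --- the notation $(\X,\Y)$-\Rewrite{} means ``input an $\X$-circuit, output a $\Y$-circuit computing the same matrix'' --- so there is no ``\XOR-case'' sweep of the input. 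Applying bitwise \XOR{} to the children's vectors of an \OR-circuit would compute the parity of the number of input-to-output paths rather than reachability, i.e.\ the wrong matrix. The repair is immediate with what you already have: run the bitwise-\OR{} sweep in both cases to obtain the supports $\supp(y_i)$, and observe that the emitted circuit, in which each output is fed directly by its support among the inputs, is simultaneously a legal \SUM-circuit (each input reaches each output along at most one path, as you note) and a legal \XOR-circuit (no input feeds an output twice, so symmetric difference coincides with union). With that correction the argument is complete and matches the paper's proof.
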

\begin{proof}
Suppose we are given an \OR-circuit $\Ci$ as input. The matrix $A$ computed by $\Ci$ can be easily extracted from $\Ci$ in time $\Oh(|\Ci|^2)$. We then simply output the trivial depth-1 \SUM-circuit for $A$ that has size at most $n^2 \leq |\Ci|^2$.
\end{proof}

\subsection{Proof of Theorem \ref{thm:rewrite}} \label{sec:proof-of-rewrite}

The main technical ingredient in our proof is Lemma \ref{lemma:covering_to_circuit_rewriting} below, which states that if subquadratic-time rewriting algorithms exist, then certain simple covering problems can be solved faster than in a trivial manner.

In the following we consider set systems defined by $L_1, \dotsc, L_n$ and $R_1, \dotsc, R_n$ that are (not necessarily distinct) subsets of~$[m]$. We say that $(i,j)$ is a \emph{covering pair} if $L_j \cup R_i = [m]$.

\begin{lemma}\label{lemma:covering_to_circuit_rewriting}
Suppose we are given sets $L_1, \dotsc, L_n,R_1, \dotsc, R_n\subseteq [m]$ as input.
\begin{enumerate}[label=(\alph*),leftmargin=*]
\item If \srewrite{} can be solved in time $\Oh(\card{\Ci}^{2-\epsilon})$ for some constant $\epsilon > 0$, then the number of covering pairs can be computed in time $\Oh((nm)^{2-\epsilon})$.
\item If \xrewrite{} can be solved in time $\Oh(\card{\Ci}^{2-\epsilon})$ for some constant $\epsilon > 0$, then the parity of the number of covering pairs can be computed in time $\Oh((nm)^{2-\epsilon})$.
\end{enumerate}
\end{lemma}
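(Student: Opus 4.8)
The plan is to encode the covering-pair counting problem as a circuit-rewriting instance, so that the matrix computed by a small \OR-circuit has row/column sums that reveal the number of covering pairs. First I would build, from the sets $L_1,\dots,L_n,R_1,\dots,R_n\subseteq[m]$, a boolean matrix $A$ of dimensions roughly $n\times n$ together with an \OR-circuit $\Ci$ for $A$ of size $\Oh(nm)$. The natural choice is to take $a_{ij}=1$ iff $(i,j)$ is a covering pair, i.e.\ iff $L_j\cup R_i=[m]$. Equivalently, $(i,j)$ is \emph{not} a covering pair iff there is some coordinate $\ell\in[m]$ with $\ell\notin L_j$ and $\ell\notin R_i$; so the complement $\bar A$ is exactly the union, over $\ell\in[m]$, of the combinatorial rectangles $\{i:\ell\notin R_i\}\times\{j:\ell\notin L_j\}$. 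This gives a cheap \OR-circuit for $\bar A$: introduce, for each $\ell$, two gates $u_\ell=\bigvee\{x_j:\ell\notin L_j\}$ and then OR the appropriate $u_\ell$'s into each output. The total wire count is $\Oh(nm)$, and the circuit is constructed in time $\Oh(nm)$.

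The point of working with $\bar A$ rather than $A$ is that the number of covering pairs is $n^2-|\bar A|$, and $|\bar A|$ is just the total number of $1$-entries in the matrix computed by $\Ci$. So the second step is: feed $\Ci$ to the hypothesised $\Oh(|\Ci|^{2-\epsilon})$-time \srewrite{} algorithm to get a \SUM-circuit $\Ci'$ computing the same boolean matrix $\bar A$ over $(\N,+,\cdot)$. In a \SUM-circuit each output $y_i$ literally computes $\sum_j \bar a_{ij} x_j$ over the non-negative integers, so evaluating $\Ci'$ on the all-ones input $x=\mathbf 1$ yields the vector of row sums $(\sum_j \bar a_{ij})_i$; summing these gives $|\bar A|$ exactly, hence the number of covering pairs. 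By Proposition (evaluation of a circuit on an input costs $\Oh(|\Ci'|)$) and the size bound $|\Ci'|=\Oh(|\Ci|^{2-\epsilon})=\Oh((nm)^{2-\epsilon})$, the whole computation runs in time $\Oh((nm)^{2-\epsilon})$. This proves part~(a). For part~(b), the identical construction applies, except we hand $\Ci$ to the \xrewrite{} algorithm and evaluate the resulting \XOR-circuit on $x=\mathbf 1$: over $\GF(2)$ this returns the row-sum parities $(\bigoplus_j \bar a_{ij})_i$, whose XOR is the parity of $|\bar A|$, hence the parity of the number of covering pairs (since $n^2-|\bar A|\equiv |\bar A| \pmod 2$ when $n$ is even, and in general we just add $n^2\bmod 2$).

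The main obstacle I anticipate is bookkeeping around sizes and the $\Oh$ notation: one must check that $|\Ci|=\Theta(nm)$ up to logarithmic factors (it could be smaller if many $L_j$ coincide, which only helps), that the binary encoding of $\Ci$ has length $\Oh(|\Ci|)$ so that the rewriting algorithm's runtime bound $\Oh(|\Ci|^{2-\epsilon})$ really is $\Oh((nm)^{2-\epsilon})$, and that a degenerate case like $m$ very small relative to $n$ (so $|\Ci|<n$) is handled by the convention $|\Ci|\ge n$ from the preliminaries. A second minor point: the rewriting algorithm is only guaranteed to output \emph{some} \SUM- (resp.\ \XOR-) circuit for the same matrix, not a small one, but its size is controlled by its own runtime, which is exactly what the hypothesis bounds; so no extra argument is needed there. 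Everything else is routine.
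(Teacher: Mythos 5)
Your proposal is correct and follows essentially the same route as the paper: you build the identical depth-2 \OR-circuit of size $O(nm)$ for the complement matrix $\bar A$ (viewing non-covering pairs as a union of $m$ rectangles), invoke the hypothesised rewriting algorithm, and evaluate the resulting \SUM- (resp.\ \XOR-) circuit on the all-ones vector to recover $|\bar A|$ (resp.\ its parity), from which the count (resp.\ parity) of covering pairs follows. Your handling of the parity correction by $n^2 \bmod 2$ and of the size/encoding bookkeeping matches the paper's (largely implicit) treatment.
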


\begin{proof}[Proof of (a).]
Let $A=(a_{ij})$ be an $n\times n$ matrix defined by $a_{ij} = 1$ iff $(i,j)$ is a covering pair. We show how to compute $|A|$ without constructing $A$ explicitly.

Suppose for a moment that we had a small \SUM-circuit $\Ci$ for $A$. The value $|A|$ can be recovered from the circuit $\Ci$ in time $\Oh(|\Ci|)$ via the following trick: evaluate $\Ci$ (over the integers) on the all-1 vector $\mathbbm{1}$ to obtain $y=\Ci(\mathbbm{1})\in \N^n$; but now
\begin{equation} \label{eq:first-attempt}
|A| = \mathbbm{1}^\top\!A\mathbbm{1} = \mathbbm{1}^\top \Ci(\mathbbm{1}) = y_1 + \cdots + y_n.
\end{equation}

Unfortunately, we do not know how to construct a small \SUM-circuit for $A$. Instead, our key observation below will be that \emph{the complement matrix} $\bar{A}$ admits an \OR-circuit $\Ci^\OR$ of size only $|\Ci^\OR|=O(nm)$. By assumption, we can then rewrite $\Ci^\OR$ as a \SUM-circuit $\Ci^+$ in time $\Oh(|\Ci^\OR|^{2-\epsilon})=\Oh((nm)^{2-\epsilon})$. In particular, the size of the new circuit must also be
\[
|\Ci^+| = \Oh\left((nm)^{2-\epsilon}\right).
\]
Analogously to (\ref{eq:first-attempt}) we can then recover $|A|$ from $\Ci^\SUM$ in time $\Oh(|\Ci^\SUM|)$:
\[
|A| = n^2 - |\bar{A}| = n^2 - \mathbbm{1}^\top\Ci^\SUM(\mathbbm{1}).
\]
Indeed, it remains to describe how to construct $\Ci^\OR$ for $\bar{A}$ in time $\Oh(nm)$.

\paragraph{Construction}
Define a depth-2 circuit $\Ci^\OR$ follows: The $0$-th layer of $\Ci^\OR$ hosts input gates $l_j$, $j\in[n]$; the $1$-st layer contains intermediate gates $g_k$, $k\in[m]$; and the $2$-nd layer contains output gates $r_i$, $i\in[n]$. Each input gate $l_j$ is connected to gates $g_k$ for $k\in [m]\smallsetminus L_j$; similarly, each output gate $r_i$ is connected to gates $g_k$ for $k\in [m] \smallsetminus R_i$. To see that $\Ci^\OR$ computes $\bar{A}$ note that there is a path from input $l_i$ to output $r_j$ iff there is a $k\in [m]$ such that $k\notin L_i\cup R_j$ iff $(i,j)$ is not a covering pair. Note also that $|\Ci^\OR| \leq 2nm$ and that the construction takes time~$\Oh(nm)$.
\end{proof}

\begin{proof}[Proof of (b).]
The proof is the same as above, except we work over $\GF(2)$.
\end{proof}

Next, we reduce \NumSAT{} and \ParitySAT{} to the covering problems in Lemma~\ref{lemma:covering_to_circuit_rewriting}. Here we are essentially applying a technique of Williams \cite[Theorem~5]{Williams:2csp}.

\begin{theorem}\label{thm:reduction-to-covering} We have the following reductions:
\begin{enumerate}[label=(\alph*),leftmargin=*]
\item If \srewrite{} can be solved in time $\Oh(\card{\Ci}^{2-\epsilon})$ for some $\epsilon > 0$, then \NumSAT{} can be solved in time $2^{(1-\epsilon/2)n}\poly(n,m)$.
\item If \xrewrite{} can be solved in time $\Oh(\card{\Ci}^{2-\epsilon})$ for some $\epsilon > 0$, then \ParitySAT{} can be solved in time $2^{(1-\epsilon/2)n}\poly(n,m)$.
\end{enumerate}
\end{theorem}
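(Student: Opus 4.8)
The plan is to reduce CNF-SAT to the covering problem from Lemma~\ref{lemma:covering_to_circuit_rewriting}, using the classical "split the variables in half" trick of Williams. Given an $n$-variable, $m$-clause CNF formula $\varphi$, I would partition the variable set into two halves $V_L$ and $V_R$ of size $n/2$ each. A full assignment to $\varphi$ is then a pair $(\alpha,\beta)$ where $\alpha$ assigns $V_L$ and $\beta$ assigns $V_R$; there are $2^{n/2}$ choices for each. Index the left assignments by $j\in[N]$ and the right assignments by $i\in[N]$, where $N:=2^{n/2}$. The key point is that a clause $C$ of $\varphi$ is satisfied by $(\alpha,\beta)$ if and only if $C$ is satisfied by the $V_L$-literals under $\alpha$ \emph{or} by the $V_R$-literals under $\beta$. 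So define, for each right assignment $i$, the set $R_i:=\{k\in[m] : \text{clause } k \text{ is satisfied by } \beta_i \text{ restricted to } V_R\}$, and for each left assignment $j$, the set $L_j:=\{k\in[m] : \text{clause } k \text{ is satisfied by } \alpha_j \text{ restricted to } V_L\}$. Then $L_j\cup R_i=[m]$ precisely when every clause is satisfied, i.e.\ when $(\alpha_j,\beta_i)$ is a satisfying assignment of $\varphi$. Thus the number of covering pairs equals $\#\varphi$, the number of satisfying assignments, and its parity equals the $\ParitySAT$ answer.

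Next I would check the running time. Computing all the $L_j$ and $R_i$ takes time $2^{n/2}\poly(n,m)$: for each of the $N=2^{n/2}$ assignments on a side, evaluate all $m$ clauses. We now have an instance of the covering problem with parameters $n':=N=2^{n/2}$ and $m':=m$. Feeding this into Lemma~\ref{lemma:covering_to_circuit_rewriting}(a), assuming $\srewrite$ runs in time $\Oh(|\Ci|^{2-\epsilon})$, the number of covering pairs is computed in time $\Oh\bigl((n'm')^{2-\epsilon}\bigr)=\Oh\bigl((2^{n/2}m)^{2-\epsilon}\bigr)=2^{(1-\epsilon/2)n}\poly(n,m)$, where the $\Oh$ polylog factors in $2^{n/2}$ are swallowed into the $\poly(n,m)$ term. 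Adding the $2^{n/2}\poly(n,m)$ preprocessing cost keeps the total at $2^{(1-\epsilon/2)n}\poly(n,m)$, proving part~(a). Part~(b) is identical, using Lemma~\ref{lemma:covering_to_circuit_rewriting}(b), which delivers the parity of the covering-pair count, hence the $\ParitySAT$ answer, in the same time bound.

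Finally I would note that Theorem~\ref{thm:rewrite} follows immediately: a $2^{(1-\delta)n}\poly$-time algorithm for $\ParitySAT$ (which includes $\UniqueSAT$ and hence, via the Valiant–Vazirani / randomized or deterministic isolation route, $\CNFSAT$) or for $\NumSAT$ (which trivially solves $\CNFSAT$) with $\delta=\epsilon/2>0$ contradicts the strong exponential time hypothesis. For the $\NumSAT$ case this implication is direct; for $\ParitySAT$ one invokes the standard fact that $\ParitySAT$ is as hard as $\CNFSAT$ under randomized SETH-preserving reductions. The main obstacle is bookkeeping rather than conceptual: one must be careful that the circuit $\Ci^\OR$ built in Lemma~\ref{lemma:covering_to_circuit_rewriting} has size $O(n'm')=O(2^{n/2}m)$ and admits the $\Oh(\cdot)$-length encoding required for the RAM-model time bounds, and that the $\poly\log$ overheads from circuit encoding and from $N=2^{n/2}$ do not swamp the $2^{-\epsilon n/2}$ savings — they do not, since they are merely polynomial in $n$ and $m$. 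I would also explicitly record, as remarked after the theorem statement, that this reduction shows $\srewrite$ in subquadratic time yields a faster $\NumSAT$ algorithm and $\xrewrite$ in subquadratic time yields a faster $\ParitySAT$ algorithm.
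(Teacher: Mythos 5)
Your proposal is correct and follows essentially the same route as the paper: the Williams-style split of the variables into two halves, with $L$ and $R$ sets recording the clauses satisfied by each half-assignment so that covering pairs correspond exactly to satisfying assignments, followed by an application of Lemma~\ref{lemma:covering_to_circuit_rewriting} with parameters $2^{n/2}$ and $m$. The only differences are cosmetic (the paper also notes the trivial padding step to make $n$ even, and defers the isolation-lemma argument for \CNFSAT{} versus \ParitySAT{} to the proof of Theorem~\ref{thm:rewrite}, as you do).
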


\begin{proof}
Let $\varphi = \{ C_1, \dotsc C_m \}$ be an instance of \CNFSAT{} over variables $x_1,\dotsc,x_n$.
Without loss of generality (by inserting
one variable as necessary), we may assume that $n$ is even.
Call the variables $x_1,\ldots,x_{n/2}$ {\em left} variables
and the variables $x_{n/2+1},\ldots,x_{n}$ {\em right} variables.

For each truth assignment $s\in\{0,1\}^{n/2}$ to the left variables, let $L_s \subseteq \varphi$ be the set of clauses satisfied by $s$. Similarly, for assignment $t\in\{0,1\}^{n/2}$ to the right variables, let $R_t\subseteq \varphi$ be the set of clauses satisfied by $t$. Clearly, the compound assignment $(s,t)$ to all the variables satisfies $\varphi$ if and only if $L_s \cup R_t = \varphi$. That is, the number of satisfying assignments is precisely the number of covering pairs of the set system $\{L_s,R_t\}$, $s,t\in\{0,1\}^{n/2}$. Thus, both claims follow from Lemma~\ref{lemma:covering_to_circuit_rewriting}.
\end{proof}
We can now finish the proof of Theorem \ref{thm:rewrite}:
\begin{enumerate}[label=$-$,noitemsep]
\item For \srewrite{} the result follows immediately from Theorem~\ref{thm:reduction-to-covering}.
\item For \xrewrite{} we need to make the following additional argument.
As discussed by Cygan et al.~\cite{cygan2012} the $k$-CNF Isolation Lemma of Calabro et al.~\cite{calabro2008} can be applied to show that any $2^{(1-\epsilon)n}\poly(n,m)$ time algorithm for \ParitySAT{} can be turned into an $2^{(1-\epsilon')n}\poly(n,m)$ time Monte Carlo algorithm for \CNFSAT{} where $\epsilon' > 0$. Recognising this, the result follows from Theorem~\ref{thm:reduction-to-covering}.
\end{enumerate}

\bigskip
\paragraph{\bf Acknowledgements}

We are grateful to Stasys Jukna for pointing out a more direct proof of Corollary~\ref{cor:or-xor-sep} as referenced in the text. We also thank Igor Sergeev for providing many references, in particular, one simplifying our proof of Theorem~\ref{thm:k-uniform}. Furthermore, we thank Jukka Suomela for discussions.

This research is supported in part by Academy of Finland, grants 132380 and 252018 (M.G.), 252083 and 256287 (P.K.), and by Helsinki Doctoral Programme in Computer Science - Advanced Computing and Intelligent Systems (J.K.).

\DeclareUrlCommand{\Doi}{\urlstyle{same}}
\newcommand{\doi}[1]{\href{http://dx.doi.org/#1}{\footnotesize\sf doi:\Doi{#1}}}


\begin{thebibliography}{30}
\providecommand{\natexlab}[1]{#1}
\providecommand{\url}[1]{\texttt{#1}}
\expandafter\ifx\csname urlstyle\endcsname\relax
  \providecommand{\doi}[1]{doi: #1}\else
  \providecommand{\doi}{doi: \begingroup \urlstyle{rm}\Url}\fi

\bibitem[Alon and Spencer(2000)]{alon00probabilistic}
N.~Alon and J.~H. Spencer.
\newblock \emph{The Probabilistic Method}.
\newblock John Wiley \& Sons, 2 edition, 2000.

\bibitem[Alon et~al.(1990)Alon, Karchmer, and Wigderson]{alon90linear}
N.~Alon, M.~Karchmer, and A.~Wigderson.
\newblock Linear circuits over {GF}(2).
\newblock \emph{SIAM Journal on Computing}, 19\penalty0 (6):\penalty0
  1064--1067, 1990.
\newblock \doi{10.1137/0219074}.

\bibitem[Bj\"{o}rklund et~al.(2012)Bj\"{o}rklund, Husfeldt, Kaski, Koivisto,
  Nederlof, and Parviainen]{Bjorklund:zeta}
A.~Bj\"{o}rklund, T.~Husfeldt, P.~Kaski, M.~Koivisto, J.~Nederlof, and
  P.~Parviainen.
\newblock Fast zeta transforms for lattices with few irreducibles.
\newblock In \emph{Proceedings of the 23rd Annual ACM-SIAM Symposium on
  Discrete Algorithms (SODA 2012)}, pages 1436--1444. SIAM, 2012.

\bibitem[Bollob{\'a}s(2001)]{bollobas01random}
B.~Bollob{\'a}s.
\newblock \emph{Random Graphs}.
\newblock Number~73 in Cambridge studies in advanced mathematics. Cambridge
  University Press, 2nd edition, 2001.

\bibitem[Boyar et~al.(2013)Boyar, Matthews, and Peralta]{boyar13logic}
J.~Boyar, P.~Matthews, and R.~Peralta.
\newblock Logic minimization techniques with applications to cryptology.
\newblock \emph{Journal of Cryptology}, 26:\penalty0 280--312, 2013.
\newblock \doi{10.1007/s00145-012-9124-7}.

\bibitem[Calabro et~al.(2008)Calabro, Impagliazzo, Kabanets, and
  Paturi]{calabro2008}
C.~Calabro, R.~Impagliazzo, V.~Kabanets, and R.~Paturi.
\newblock The complexity of unique {$k$}-{SAT}: An isolation lemma for
  {$k$}-{CNF}s.
\newblock \emph{Journal of Computer and System Sciences}, 74\penalty0
  (3):\penalty0 386--393, 2008.
\newblock \doi{10.1016/j.jcss.2007.06.015}.

\bibitem[Cygan et~al.(2012)Cygan, Dell, Lokshtanov, Marx, Nederlof, Okamoto,
  Paturi, Saurabh, and Wahlstrom]{cygan2012}
M.~Cygan, H.~Dell, D.~Lokshtanov, D.~Marx, J.~Nederlof, Y.~Okamoto, R.~Paturi,
  S.~Saurabh, and M.~Wahlstrom.
\newblock On problems as hard as {CNF-SAT}.
\newblock In \emph{Proceedings of the 27th Conference on Computational
  Complexity (CCC 2012)}, pages 74--84. IEEE, 2012.
\newblock \doi{10.1109/CCC.2012.36}.

\bibitem[Find et~al.(2013)Find, G{\"o}{\"o}s, Kaski, and
  Korhonen]{find13separating}
M.~G. Find, M.~G{\"o}{\"o}s, P.~Kaski, and J.~H. Korhonen.
\newblock Separating {OR}, {SUM}, and {XOR} circuits.
\newblock Submitted, 2013.

\bibitem[G\'{a}l et~al.(2012)G\'{a}l, Hansen, Kouck\'{y}, Pudl\'{a}k, and
  Viola]{gal12tight}
A.~G\'{a}l, K.~A. Hansen, M.~Kouck\'{y}, P.~Pudl\'{a}k, and E.~Viola.
\newblock Tight bounds on computing error-correcting codes by bounded-depth
  circuits with arbitrary gates.
\newblock In \emph{Proceedings of the 44th Annual ACM Symposium on Theory of
  Computing (STOC 2012)}, pages 479--494. ACM, 2012.
\newblock \doi{10.1145/2213977.2214023}.

\bibitem[Garey and Johnson(1979)]{garey-johnson}
M.~R. Garey and D.~S. Johnson.
\newblock \emph{{C}omputers and {I}ntractability: {A} {G}uide to the {T}heory
  of {NP}-{C}ompleteness}.
\newblock W.H. Freeman and Company, 1979.

\bibitem[Gashkov and Sergeev(2011)]{gashkov11complexity}
S.~B. Gashkov and I.~S. Sergeev.
\newblock On the complexity of linear {B}oolean operators with thin matrices.
\newblock \emph{Journal of Applied and Industrial Mathematics}, 5:\penalty0
  202--211, 2011.
\newblock \doi{10.1134/S1990478911020074}.

\bibitem[Grinchuk and Sergeev(2011)]{grinchuk11thin}
M.~I. Grinchuk and I.~S. Sergeev.
\newblock Thin circulant matrixes and lower bounds on complexity of some
  {B}oolean operators.
\newblock \emph{Diskretny{\u\i} Analiz i Issledovanie Operatsi{\u\i}},
  18:\penalty0 38--53, 2011.

\bibitem[Impagliazzo and Paturi(2001)]{Impagliazzo:ksat}
R.~Impagliazzo and R.~Paturi.
\newblock On the complexity of {$k$}-{SAT}.
\newblock \emph{Journal of Computer and System Sciences}, 62\penalty0
  (2):\penalty0 367--375, 2001.
\newblock \doi{10.1006/jcss.2000.1727}.

\bibitem[J{\"a}rvisalo et~al.(2012)J{\"a}rvisalo, Kaski, Koivisto, and
  Korhonen]{jarvisalo12finding}
M.~J{\"a}rvisalo, P.~Kaski, M.~Koivisto, and J.~H. Korhonen.
\newblock Finding efficient circuits for ensemble computation.
\newblock In \emph{Proceedings of the 15th International Conference on Theory
  and Applications of Satisfiability Testing (SAT 2012)}, pages 369--382.
  Springer, 2012.
\newblock \doi{10.1007/978-3-642-31612-8_28}.

\bibitem[Jukna(2006)]{jukna06disproving}
S.~Jukna.
\newblock Disproving the single level conjecture.
\newblock \emph{SIAM Journal on Computing}, 36\penalty0 (1):\penalty0 83--98,
  2006.
\newblock \doi{10.1137/S0097539705447001}.

\bibitem[Jukna(2012)]{jukna12boolean}
S.~Jukna.
\newblock \emph{Boolean Function Complexity: Advances and Frontiers}, volume~27
  of \emph{Algorithms and Combinatorics}.
\newblock Springer, 2012.

\bibitem[Jukna(2013)]{jukna13comment}
S.~Jukna.
\newblock Comment on {XOR} versus {OR} circuits, April 2013.
\newblock URL
  \url{http://www.thi.informatik.uni-frankfurt.de/~jukna/boolean/comment9.html}.

\bibitem[Knuth(1998)]{knuth98art}
D.~E. Knuth.
\newblock \emph{The Art of Computer Programming}, volume~2.
\newblock Addison--Wesley, 3rd edition, 1998.

\bibitem[Koll{\'a}r et~al.(1996)Koll{\'a}r, R{\'o}nyai, and
  Szab{\'o}]{kollar96norm}
J.~Koll{\'a}r, L.~R{\'o}nyai, and T.~Szab{\'o}.
\newblock Norm-graphs and bipartite {T}ur{\'a}n numbers.
\newblock \emph{Combinatorica}, 16\penalty0 (3):\penalty0 399--406, 1996.
\newblock \doi{10.1007/BF01261323}.

\bibitem[Kushilevitz and Nisan(1997)]{kushilevitz97communication}
E.~Kushilevitz and N.~Nisan.
\newblock \emph{Communication Complexity}.
\newblock Cambridge University Press, 1997.

\bibitem[Lamagna and Savage(1974)]{lamagna74computational}
E.~A. Lamagna and J.~E. Savage.
\newblock Computational complexity of some monotone functions.
\newblock In \emph{IEEE Conference Record of 15th Annual Symposium on Switching
  and Automata Theory}, pages 140--144, 1974.
\newblock \doi{10.1109/SWAT.1974.9}.

\bibitem[Mehlhorn(1979)]{mehlhorn79some}
K.~Mehlhorn.
\newblock Some remarks on {B}oolean sums.
\newblock \emph{Acta Informatica}, 12:\penalty0 371--375, 1979.
\newblock \doi{10.1007/BF00268321}.

\bibitem[Miltersen(1998)]{miltersen98error}
P.~B. Miltersen.
\newblock Error correcting codes, perfect hashing circuits, and deterministic
  dynamic dictionaries.
\newblock In \emph{Proceedings of the 9th Annual ACM-SIAM Symposium on Discrete
  Algorithms (SODA 1998)}, pages 556--563. SIAM, 1998.

\bibitem[Nechiporuk(1971)]{nechiporuk71boolean}
{\'E}.~I. Nechiporuk.
\newblock On a {B}oolean matrix.
\newblock \emph{Systems Theory Research}, 21:\penalty0 236--239, 1971.

\bibitem[P{\v{a}}tra{\c{s}}cu and Williams(2010)]{DBLP:conf/soda/PatrascuW10}
M.~P{\v{a}}tra{\c{s}}cu and R.~Williams.
\newblock On the possibility of faster {SAT} algorithms.
\newblock In \emph{Proceedings of the 21st Annual ACM-SIAM Symposium on
  Discrete Algorithms (SODA 2010)}, pages 1065--1075. SIAM, 2010.

\bibitem[Pippenger(1980{\natexlab{a}})]{pippenger80another}
N.~Pippenger.
\newblock On another {B}oolean matrix.
\newblock \emph{Theoretical Computer Science}, 11\penalty0 (1):\penalty0
  49--56, 1980{\natexlab{a}}.
\newblock \doi{10.1016/0304-3975(80)90034-1}.

\bibitem[Pippenger(1980{\natexlab{b}})]{pippenger80evaluation}
N.~Pippenger.
\newblock On the evaluation of powers and monomials.
\newblock \emph{SIAM Journal on Computing}, 9\penalty0 (2):\penalty0 230--250,
  1980{\natexlab{b}}.
\newblock \doi{10.1137/0209022}.

\bibitem[Pudl{\'a}k and R{\"o}dl(2004)]{pudlak04pseudorandom}
P.~Pudl{\'a}k and V.~R{\"o}dl.
\newblock Pseudorandom sets and explicit constructions of {R}amsey graphs.
\newblock In \emph{Complexity of computations and proofs}, volume~13 of
  \emph{Quaderni Di Matematica}. 2004.

\bibitem[Spielman(1996)]{spielman96linear}
D.~A. Spielman.
\newblock Linear-time encodable and decodable error-correcting codes.
\newblock \emph{IEEE Transactions on Information Theory}, 42\penalty0
  (6):\penalty0 1723--1731, 1996.
\newblock \doi{10.1109/18.556668}.

\bibitem[Williams(2005)]{Williams:2csp}
R.~Williams.
\newblock A new algorithm for optimal 2-constraint satisfaction and its
  implications.
\newblock \emph{Theoretical Computer Science}, 348\penalty0 (2--3):\penalty0
  357--365, 2005.
\newblock \doi{10.1016/j.tcs.2005.09.023}.

\end{thebibliography}
\end{document}